\def\Gin@getbase#1{%
  \edef\Gin@base{\filename@area\filename@base}%
  \edef\Gin@ext{#1}%
}
\newcommand{\renewtheorem}[1]{%
  \expandafter\let\csname #1\endcsname\relax%
  \expandafter\let\csname c@#1\endcsname\relax%
  \expandafter\let\csname end#1\endcsname\relax%
  \newtheorem{#1}%
}
\theoremstyle{plain}
	\theoremstyle{definition}
	\theoremstyle{remark}
	\theoremstyle{claimstyle}
\theoremstyle{plain}
\tikzset{%
  roundbox/.style={rectangle, rounded corners=4pt},
  >=To,
  gedge/.style={->,>=latex},
  condtri/.style={draw,dart,dart tail angle=135,dart tip angle=50,inner xsep=0.1em, inner ysep=0.2em,anchor=tip},
  fancydotted/.style={dash pattern=on 1.25pt off 1.75pt},
  gnfont/.style={font=\scriptsize},
  gn/.style={gnfont,circle,draw,inner sep=0.2pt, minimum size=8pt},
  exaloop above/.style={out=110,in=70,loop,looseness=9,/pgf/bezier bounding box=true},
  exaloop below/.style={out=-70,in=-110,loop,looseness=9,/pgf/bezier bounding box=true},
  exaloop right/.style={out=25,in=-15,loop,looseness=9,/pgf/bezier bounding box=true},
  exaloop left/.style={out=-155,in=165,loop,looseness=9,/pgf/bezier bounding box=true},
}
\definecolor{graphmorphismgreen}{rgb}{0.4,0.84,0.3}
\definecolor{graphmorphismfailred}{rgb}{0.9,0.2,0.4}
\tikzset{
  cospanintcommon/.style={draw=black!38,line width=0.5pt},
  cospanint/.style={->,cospanintcommon},
  cospanintmono/.style={cospanintcommon,>->},
  cospanarr/.style={->,line width=0.5pt},
  graphmor/.style={cospanarr},
  graphmorcol/.style={->,line width=0.5pt,color=graphmorphismgreen},
  graphmorcolfail/.style={->,dash pattern=on 1.25pt off 1.75pt,line width=0.5pt,color=graphmorphismfailred},
}
  \newcommand{\dotEx}{\ .\ \exists\ }
  \newcommand{\dotFalse}{\ . \condfalse}
  \newcommand{\dotTrue}{\ . \condtrue}
  \def\saveBboxNorthAs#1{\coordinate (#1) at ($(current bounding box.north)+(0pt,0.8pt)$);}
  \def\saveBboxSouthAs#1{\coordinate (#1) at ($(current bounding box.south)+(0pt,-0.8pt)$);}
  \def\faintborder{%
    \begin{pgfonlayer}{bg}
      \draw[black!40, fill=black!5, rounded corners=2pt, line width=0.2pt, overlay]
      ($(current bounding box.south west)+(-1pt,-1pt)$) rectangle
      ($(current bounding box.north east)+(1pt,0.8pt)$);
    \end{pgfonlayer}
  }
  \def\fcGraph#1#2{
    \tikz[baseline=(#1.base),x=0.6cm]{#2 \faintborder}%
  }
  \def\fcGraphRem#1#2{
    \tikz[remember picture,baseline=(#1.base),x=0.6cm]{#2 \faintborder}%
  }
\renewcommand{\phi}{\varphi}
\renewcommand{\emptyset}{\varnothing}
\newcommand{\short}[1]{}
\newcommand{\full}[1]{#1}
\DeclareMathOperator{\RO}{RO}
\newcommand{\id}{\mkern1mu\mathrm{id}}
\newcommand{\trueK}{\mathrm{true}}
\newcommand{\condtrue}{\mkern1mu\trueK}
\newcommand{\falseK}{\mathrm{false}}
\newcommand{\condfalse}{\mkern1mu\falseK}
\newcommand{\unknownK}{\mathrm{unknown}}
\newcommand{\graphf}{\textbf{Graph}_{\textbf{fin}}}
\newcommand{\graphfinj}{\textbf{Graph}_{\textbf{fin}}^{\textbf{inj}}}
\def\ILC{\mathbf{ILC}}
\def\Cospan{\mathbf{Cospan}}
\def\defeq{:=}
\newcommand{\notmodels}{\mathrel{{\mkern3mu\not\mkern-3mu\models}}}
\newcommand{\sledom}{\Relbar\joinrel\mathrel{|}}
\newcommand{\hatmodels}{\mathrel{\widehat{\rule{0ex}{1.3ex}\smash{\models}}}}
\newcommand{\catC}{\ensuremath{\mathbf{C}}}
\newcommand{\dom}{\ensuremath{\mathsf{dom}}}
\newcommand{\cod}{\ensuremath{\mathsf{cod}}}
\newcommand{\Cond}{\ensuremath{\mathbf{Cond}}}
\newcommand{\Bad}{\ensuremath{\mathsf{Bad}}\xspace}
\newcommand{\Init}{\ensuremath{\mathsf{Init}}\xspace}
\newcommand{\append}{\ensuremath{\mathsf{append}}\xspace}
\newcommand{\Arr}{\mathbf{Arr}}
\newcommand{\sem}[1]{\llbracket{#1}\rrbracket}
\newcommand{\tup}[1]{\langle{#1}\rangle}
\def\biglor{\bigvee}
\def\bigland{\bigwedge}
\newcommand{\tdots}{. \mkern1mu . \mkern1mu . \mkern1mu}
\newcommand{\HOLE}{\_}
\DeclareMathOperator{\SP}{sp}
\DeclareMathOperator{\WP}{wp}
\DeclareMathOperator{\RULE}{rule}
\newcommand{\PArr}[1]{\mathcal{P}(\Arr_{#1})}
\newcommand{\Abs}{\textrm{Abs}}
\def\obnull{0}
\newcommand{\abstractTransition}[1]{\csname ext@arrow\endcsname 0055{\csname Rightarrowfill@\endcsname}{}{#1\ }}
\newcommand{\biggg}{\bBigg@{3}}
\newcommand{\Biggg}{\bBigg@{4}}
\newcommand{\bigggg}{\bBigg@{5}}
\newcommand{\Bigggg}{\bBigg@{6}}
\def\orcidID#1{\smash{\href{http://orcid.org/#1}{\protect\raisebox{-1.25pt}{\protect\includegraphics{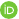}}}}}
\begin{document}

\title{Counterexample-Guided Abstraction Refinement for Generalized
  Graph Transformation Systems\full{ (Full Version)}}
\titlerunning{CEGAR for Generalized Graph Transformation Systems}

\author{%
  Barbara K\"onig\inst{1}\orcidID{0000-0002-4193-2889} \and
  Arend Rensink\inst{2}\orcidID{0000-0002-1714-6319} \and
  Lara Stoltenow\inst{1}\orcidID{0009-0009-1667-8573} \and
  Fabian Urrigshardt\inst{1}}
\authorrunning{B. König, A. Rensink, L. Stoltenow, F. Urrigshardt}

\institute{%
  University of Duisburg-Essen, Germany \and
  University of Twente, Netherlands \\
  \email{\{barbara\_koenig,lara.stoltenow\}@uni-due.de, arend.rensink@utwente.nl}}

\maketitle

\begin{abstract}
  This paper addresses the following verification task: Given a graph
  transformation system and a class of initial graphs, can we
  guarantee (non-)reachability of a given other class of graphs that
  characterizes bad or erroneous states?  Both initial and bad states
  are characterized by nested conditions (having first-order
  expressive power). Such systems typically have an infinite state
  space, causing the problem to be undecidable. We use abstract
  interpretation to obtain a finite approximation of that state space,
  and employ counter-example guided abstraction refinement to
  iteratively obtain suitable predicates for automated
  verification. Although our primary application is the analysis of
  graph transformation systems, we state our result in the general
  setting of reactive systems.
\end{abstract}

\keywords{reachability analysis \and CEGAR \and reactive systems \and
  abstract interpretation \and graph transformation}

\section{Introduction}

One of the successful techniques to analyze systems with very large or
infinite state spaces is \emph{abstract interpretation}.  This either
under- or over-approximates the possible behaviours, reducing the
state space to a manageable size (and also covering a set of potential
initial states rather than just one) at the cost of precision:
essentially, different states are regarded as the same not if they are
for all intents and purposes equivalent, but if they are, in some
precise sense, alike. If we want to check certain properties about the
behaviour of the original system, such as the absence of errors in
reachable states or the satisfaction of more refined temporal logic
properties, this can be done on the abstracted state space instead;
however, due to the imprecision, the answers obtained in this way may
not be correct for the full state space.  In particular, if the
abstraction was an under-approximation, analysing whether an error
state is reachable may yield \emph{false negatives} (the answer
\emph{no} may be incorrect for the full, concrete state space); if it
was an over-approximation, the analysis may yield \emph{false
  positives} (the answer \emph{yes} may be incorrect).

Of these two, false positives can be detected more easily, since the answer \emph{yes} to a question of the type ``can an error state be reached'' or ``is there a trace with a certain temporal property'' comes with a \emph{witness}, being an actual trace in the abstract state space that reaches that error state or has that property. For historical reasons, such a witness is usually called a \emph{counterexample}, even though in our narrative it is rather an \emph{example}. We can then typically check whether such a counterexample actually exists in the full state space --- a task that is feasible even if the full state space is infinite. If the counterexample does exist, the original \emph{yes} answer was correct after all; if it does \emph{not} exist (in which case it is called \emph{spurious}), we know that the original \emph{yes} answer cannot be trusted; it may still be correct, but we do not know.

The method of \emph{counter-example guided
  abstraction refinement} (CEGAR)
\cite{cgjlv:abstraction-refinement-journal,hjmm:abstractions-proofs}
builds upon this principle by relying on a notion of abstraction that
can be \emph{tuned}. In particular, two states are considered to be
``alike'' if they satisfy the same predicates. A spurious
counterexample contains concrete evidence of where the original
abstraction gives rise to ``harmful'' over-approximation: namely a
trace in the abstract transition system leading to a state that does
not rule out that a certain property $\Bad$ (encoding some unwanted feature) holds,
but which does not exist in the concrete state
space. This check provides additional predicates, leading to a
stronger notion of ``alike'', which results in a new, \emph{refined} abstraction
in which at least this particular (spurious) counterexample no longer
exists. We can then start over again using the new, refined, abstract
state space, until we get either a \emph{no} answer or a \emph{yes} answer
for which the counterexample is real, or we run out of time. This
leads to the so-called CEGAR loop (see \Cref{fig:cegar-method}).
In the general case there is no guarantee
that the loop will ever terminate (e.g. due to undecidability of
the verification problem), but good results have still been
achieved in practice with the development of verification tools based
on CEGAR \cite{ccgjv:magic,bhjm:blast,blr:decade-software-mc-slam}.

A related technique for proving correctness relies on finding an \emph{invariant} $I$, which is a property that holds in the initial state and is preserved by all transitions, such that $I$ entails $\neg\Bad$. In fact, the abstraction refinement procedure outlined above may be seen as a recipe for generating such an invariant.


\begin{figure}[b]
  \centering
  \scalebox{0.54}{\input{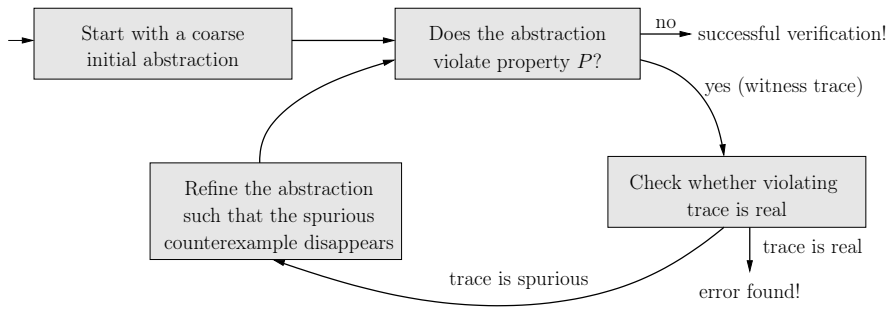}}
  \caption{A schematic depiction of the CEGAR loop}
  \label{fig:cegar-method}
\end{figure}

The contribution of this paper is to propose a
variant of CEGAR in the spirit of \emph{reactive systems}. This is a
framework developed by Milner et al.\ \cite{lm:derive-bisimulation}
that subsumes graph transformation via double-pushout. We assume that
the states are arrows in a suitable category, and transitions are
generated by conditional rules that modify those objects using
categorical operations. Our abstraction is based on so-called
\emph{nested conditions} (related to first-order logic) \cite{r:representing-fol,hp:correctness-nested-conditions}, again defined
relative to the category of choice, with a notion of satisfaction over
the states. The predicates mentioned above take the shape of such nested conditions; the
abstraction is driven by a finite set $P$ of them. The abstraction of a state, or a set of states, is defined as the subset of elements of $P$ that provably hold in that (set of) state(s). In
previous work \cite{bchk:conditional-reactive-systems}, we have shown how to compute strongest postconditions and weakest preconditions
for rules with application conditions, and this allows us to compute
the abstract (over-approximating) transitions, as well as to check whether a counterexample found on the abstract level fails to exist on the concrete level, i.e., is spurious. Given a spurious
counterexample, we can then refine the abstraction by augmenting $P$
with the characteristic properties of all the (sets of) concrete
states traversed by the counterexample. This refined abstraction is
certain not to include that counterexample any more.

A complication lies in the fact that the method outlined above at
several points requires the computation of entailment among nested
conditions. Depending on the underlying category, this is equivalent to entailment of First-Order Logic and hence undecidable; in practice, we are forced to rely on \emph{provable} entailment (i.e., using available tooling), which is necessarily a weaker relation --- or in the terminology above, may yield false negatives. Fortunately, though this introduces further imprecision in our abstraction, we show that this does not invalidate the method.

A further contribution of this paper consists in a prototype implementation of this CEGAR method for a concrete base category, viz. that of \emph{graphs}. Rules in this category are (essentially) double-pushout transformation rules, and the nested conditions correspond to ones studied before by \cite{hp:correctness-nested-conditions,r:representing-fol}. We report some experiments. Unfortunately, the performance is such that only very small examples can be analysed successfully. The biggest obstacle is the procedure for \mbox{(semi-)deciding} entailment. We believe there is a lot of room for improvement, but this is outside the scope of the current paper.

\short{A full version of the paper, including all the proofs and further
  material, is available \cite{krsu:cegar-gts-arxiv}.}

\section{Motivation}
\label{sec:motivation}

This section presents a running example that shows the principal steps
of the CEGAR method and gives an idea of its potential benefits. It is
based on a particular instantiation of our general framework
(presented formally in the next sections): the concrete states are
essentially unlabelled \emph{graphs with interfaces}
\cite{bgksz:confluence-gr-interfaces,k:hypergraph-construction-journal},
the rules are essentially \emph{double-pushout graph transformation
  rules} \cite{eps:gragra-algebraic,eept:fundamentals-agt} and our
nested conditions essentially correspond to those introduced by Habel
and Pennemann \cite{hp:correctness-nested-conditions} and Rensink
\cite{r:representing-fol}.

The example uses a graph-based representation of a set of lists, in
which the tail nodes (and no other) are marked by a self-loop. All
operations on the list are meant to guarantee that the only edges in
the graphs are those between successive list nodes and the self-loops
that mark the tail nodes; in particular, it is an error for any node
to have both an outgoing edge to a successor node and a
self-loop. This erroneous structure is captured by a condition $\Bad$;
in other words, $\Bad$ is meant to be unreachable.

We consider a scenario consisting of an initial condition $\Init_1$ that
is meant to capture precisely those graphs in which there are only
empty lists (i.e., no edges between distinct nodes), and a rule $\append$ that specifies the
extension of a list (at its tail) with a single element. Using the notation introduced in the next section, this scenario
is captured by the following conditions and rules that are based on
cospans of graphs:
\begin{align*}
\Bad & = \exists\ \emptyset \to
   \fcGraph{n1}{
      \node[gn] (n1) at (0,0) {$1$};
      \node[gn] (n2) at (1,0) {$2$};
      \draw[gedge] (n1) edge[exaloop above] (n1)
                   (n1) to (n2);
  }
  \leftarrow \fcGraph{n1}{
      \node[gn] (n1) at (0,0) {$1$};
      \node[gn] (n2) at (0.8,0) {$2$};
  } \dotTrue \\
\Init_1 & = \forall\ \emptyset \to
   \fcGraph{n1}{
      \node[gn] (n1) at (0,0) {$1$}; 
      \node[gn] (n2) at (1,0) {$2$}; 
      \draw[gedge] (n1) to (n2);
  }
  \leftarrow \fcGraph{n1}{
      \node[gn] (n1) at (0,0) {$1$};
      \node[gn] (n2) at (0.8,0) {$2$};
  }
  \dotFalse
  \\
\append & = \Big(
  \tikz[remember picture,baseline=(a.base)]{
    \node (a) at (0,0) {$\emptyset$};
    \saveBboxSouthAs{reasys motiv example L0} }
  \rightarrow
   \fcGraphRem{n1}{
      \node[gn] (n1) at (0,0) {$1$};
      \draw[gedge] (n1) edge[exaloop right] (n1);
      \saveBboxSouthAs{reasys motiv example L} }
   \leftarrow
   \fcGraphRem{n1}{
      \node[gn] (n1) at (0,0) {$1$};
      \saveBboxSouthAs{reasys motiv example I} }
   \rightarrow
   \fcGraphRem{n1}{
      \node[gn] (n1) at (0,0) {$1$};
      \node[gn] (n2) at (1,0) {$2$};
      \draw[gedge] (n1) to (n2)
                   (n2) edge[exaloop right] (n2);
      \saveBboxSouthAs{reasys motiv example R} }
   \leftarrow
      \tikz[remember picture,baseline=(a.base)]{
        \node (a) at (0,0) {$\emptyset$};
        \saveBboxSouthAs{reasys motiv example R0} }
  ,\ \condtrue
  \Big)
  \begin{tikzpicture}[remember picture, overlay]
    \draw[cospanarr] (reasys motiv example L0) to[bend right=29] node[below,pos=0.65]{$\ell$} ($(reasys motiv example I)+(-4pt,-7pt)$);
    \draw[cospanarr] (reasys motiv example R0) to[bend left=29] node[below,pos=0.65]{$r$} ($(reasys motiv example I)+(4pt,-7pt)$);
    \node[anchor=north] at (reasys motiv example L) {$L$};
    \node[anchor=north] at (reasys motiv example I) {$I$};
    \node[anchor=north] at (reasys motiv example R) {$R$};
  \end{tikzpicture}
  \\~
\end{align*}
The three graphs $L,I,R$ in rule $\append$ correspond to left-hand
side, interface, and right-hand side in double-pushout rewriting. It
should be noted that $\Init_1\models\neg\Bad$.

We want to show that $\Bad$ does not hold in any graph reachable from
any initial graph that satisfies $\Init_1$. 

The method works by initially assuming that the only knowledge we have
about a state is whether $\Init_1$, $\Bad$ or the negation of either
holds there, and checking if that knowledge is sufficient to show that
$\neg\Bad$ holds everywhere. (In fact, CEGAR can be seen as a way to
automatically generate invariants or -- more generally -- conditions
that are guaranteed to hold at certain execution points.)

Let us denote $P=\{\Init_1, \Bad\}$; then abstract states are elements
of $\{\trueK,\allowbreak\falseK,\allowbreak\unknownK\}^{P}$ --- or, equivalently, subsets of
$P\cup \{\neg \mathcal{P} \mid \mathcal{P}\in P\}$ (or conjunctions of
such predicates) that do not include both $\mathcal{P}$ and
$\neg \mathcal{P}$ for any $\mathcal{P}\in P$. The method involves the
following steps.
\begin{enumerate}
\item Compute, for every rule, the strongest postcondition ($\SP$) for the
  next unexplored abstract state. In our example, the initial state
  consists of\footnote{Note that $\Init_1\models\lnot\Bad$ implies
    $\Init_1\equiv\Init_1\land\lnot \Bad$, hence we could start with
    either formula.} $\{\Init_1,\lnot\Bad\}$ and hence we start by
  computing $\SP(\Init_1\wedge\neg\Bad,\append)$, which yields
  ${\cal A}= \exists\ \emptyset \to \fcGraph{n1}{ \node[gn] (n1) at
    (0,0) {$1$}; \node[gn] (n2) at (1,0)
    {$2$}; \draw[gedge] (n1) to (n2) (n2) edge[exaloop right] (n2); }
  \leftarrow \fcGraph{n1}{ \node[gn] (n1) at (0,0)
    {$1$}; } \ . \mathcal A'$, where the subcondition
  $\mathcal A'$ guarantees that $\Init_1 \land \neg\Bad$ holds for
  nodes $1$ and $2$, as well any additional nodes that already existed
  in the graph.

\item Infer which of the elements of $P$ or their negations are
  entailed by $\cal A$. In our example, we only have
  ${\cal A}\models \neg\Init_1$: we cannot infer either
  ${\cal A}\models \Bad$ or ${\cal A}\models \neg\Bad$. It follows
  that the successor state is $\{\neg \Init_1\}$. This gives rise to the following transition system:
\[ \{\Init_1,\neg\Bad\} \abstractTransition{\append} \{\neg\Init_1\} \]

\item Repeat the previous two steps for every new state, until there
  are no more states to be found (note that the number of reachable
  states is bounded by $3^{|P|}$) or we find a state $s$ for which
  $\neg\Bad\not\in s$. In the former case we are done: the system is
  guaranteed to be error-free. In the latter case, the system has a
  potentially faulty behaviour: namely, the trace from the initial
  state to $s$. Such a trace is called a
  \emph{counterexample}. In our running example, we already have such a
  counterexample, viz.\ the trace consisting of a single application of
  $\append$.

\item Check whether this trace really represents faulty behaviour, by
  computing the weakest precondition $\mathcal W_1$ for $\neg \Bad$.  If
  $\Init_1\models \mathcal W_1$, the counterexample is \emph{spurious}; if
  not, then any graph satisfying $\Init_1 \land \neg\mathcal W_1$, when subjected to the successive
  rules of the counterexample, will indeed give rise to a concrete
  state satisfying $\Bad$ --- hence we have a real error. In our
  running example, the weakest precondition $\WP(\append,\neg\Bad)$ is
  (simplified and in abbreviated notation):
  \[
  \mathcal W_1 \equiv
    \forall\ \fcGraph{n1}{
      \node[gn] (n1) at (0,0) {$1$};
      \draw[gedge] (n1) edge[exaloop right] (n1)
                   (n1) edge[exaloop left] (n1);
    } \dotFalse \land
    \forall\ \fcGraph{n1}{
      \node[gn] (n1) at (0,0) {$1$};
      \node[gn] (n2) at (1,0) {$2$};
      \node[gn] (n3) at (2,0) {$3$};
      \draw[gedge] (n1) edge[exaloop above] (n1)
                   (n2) edge[exaloop above] (n2)
                   (n2) to (n3);
    } \dotFalse \land
    \forall\ \fcGraph{n1}{
      \node[gn] (n0) at (0,0) {$0$};
      \node[gn] (n1) at (1,0) {$1$};
      \draw[gedge] (n1) edge[exaloop above] (n1)
                   (n0) edge[exaloop above] (n0)
                   (n0) to (n1);
    } \dotFalse
  \enspace,
  \]
  which is not entailed by $\Init_1$: for instance, 
  $\fcGraph{n1}{ \node[gn] (n1) at (0,0) {$\phantom1$}; \draw[gedge]
    (n1) edge[exaloop left] (n1) (n1) edge[exaloop right] (n1); }$ satisfies
  $\Init_1 \land \neg\mathcal W_1$. From it we can reach a bad state in
  one \append-step, hence the system is erroneous.
\end{enumerate} 
We see that the method has uncovered the fact that our initial condition $\Init_1$ is not strong enough to guarantee $\neg\Bad$: we had not considered that nodes may have more than one self-loop. We can repair this by strengthening $\Init_1$ so as to rule this out. Let us redo the analysis on the basis of
\[
\Init_2 = \Init_1
  \land \forall\ \emptyset \to \fcGraph{n1}{
      \node[gn] (n1) at (0,0) {$1$};
      \draw[gedge] (n1) edge[exaloop right] (n1);
      \draw[gedge] (n1) edge[exaloop left] (n1);
  }
  \leftarrow \fcGraph{n1}{
      \node[gn] (n1) at (0,0) {$1$};
  }
  \dotFalse
\]
Note that, on its own, $\neg\Bad$ is not an invariant: it is
\emph{not} the case that the result of applying \append to a graph
satisfying $\neg\Bad$ will certainly also satisfy $\neg\Bad$ ---
otherwise the problem would be easier. Similarly, $\Init_2$ is not an
invariant.

After repeating the first two steps above, we once more find that there is a reachable state not containing $\neg\Bad$, identifying a counterexample, this time consisting of two rule applications:
\[ \{\Init_2,\neg\Bad\} \abstractTransition{\append} \{\neg\Init_2,\neg\Bad\} \abstractTransition{\append} \{\neg\Init_2\} \]
The weakest precondition computation gives us two conditions
${\cal W}_1=\WP(\append,\allowbreak\neg \Bad)$ (identical to the one
above) and ${\cal W}_2=\WP(\append,{\cal W}_1)$ (in this particular
case $\mathcal W_2 \equiv \mathcal W_1$). Now
$\Init_2 \models {\cal W}_2$, and hence the counterexample is
spurious. When that happens, the method continues as follows:
\begin{enumerate}[resume]
\item Refine $P$ by adding predicates that ensure the counterexample
  no longer occurs on the abstract states. In particular, augmenting
  $P$ with the weakest preconditions that we just computed in order to
  check for spuriousness --- in our case ${\cal W}_1$ --- will do the
  trick. (We could also choose to add individual conjuncts
  ${\cal W}_{ij}$ if
  ${\cal W}_i={\cal W}_{i1}\wedge \cdots \wedge {\cal W}_{in}$, to possibly obtain a better abstraction. In our
  running example, the three subconditions of $\mathcal W_1$ could be turned into three new predicates.)
  We have the guarantee that this will indeed eliminate
  the spurious counterexample.

\item Restart the analysis at Step~1 on the basis of the refined $P$.
  Repeat the process until either a real counterexample is found
  (as happened in our first iteration), all reachable (abstract)
  states have been computed and they all entail $\lnot\Bad$, or time
  is up. For our running example, the second case occurs:
\[ \{\Init_2,\neg\Bad,\mathcal W_1\} \abstractTransition{\append} \{\neg\Init_2,\neg\Bad,\mathcal W_1\} 
\!\!\tikz[baseline=(n.south)]{\node (n) at (0,0) {}; \draw[double equal sign distance,-{Implies}] (n) to[out=20,in=-20,looseness=12] node[right] {\small\append} (n);}
\]
\end{enumerate}
The running example shows the power of the analysis method: it allows
us to prove that, for all graphs satisfying a given initial condition
(of which there are infinitely many), for all possible sequences of
rule applications (of arbitrary length, and generating graphs of
unbounded size), the outcome satisfies the well-formedness condition
embodied in $\neg\Bad$; or, if that is not the case, to find a
counterexample. Of course, there are practical limitations, which we
will discuss in \Cref{sec:undecidability-issues}.

\section{Preliminaries}
\label{sec:preliminaries}

\subsection{Abstract interpretation}
\label{sec:abstract-interpretation}

We rely on the principles of the theory of abstract interpretation
\cite{c:abstract-interpretation,cc:ai-unified-lattice-model}, based
on lattices and Galois connections. We first recall the definitions.

A \emph{complete lattice} $(\mathbb{C},\sqsubseteq)$ consists of a set
$\mathbb{C}$ with a partial order $\sqsubseteq$ such that each
$Y\subseteq \mathbb{C}$ has a least upper bound $\bigsqcup Y$ (also
called supremum, join) and a greatest lower bound $\bigsqcap Y$ (also
called infimum, meet).

Let $\mathbb{C}$, $\mathbb{A}$ be two lattices.  A \emph{Galois
  connection} from $\mathbb{C}$ to $\mathbb{A}$ is a pair
$\alpha\colon \mathbb{C}\to\mathbb{A}$,
$\gamma\colon \mathbb{A}\to\mathbb{C}$ of monotone functions, such that for all $\ell\in{\mathbb C}$:
$\ell\sqsubseteq \gamma(\alpha(\ell))$ and for all $m\in \mathbb A$:
$\alpha(\gamma(m)) \sqsubseteq m$.
%
Intuitively $\mathbb{C}$ represents (more) concrete values and
$\mathbb{A}$ (more) abstract values, which are connected by the
abstraction $\alpha$ and concretization $\gamma$. The order
indicates whether the values are more or less precise: i.e., whenever
$a\sqsubseteq b$, then $a$ is supposed to provide higher precision
than $b$. The function $\alpha$
(resp.\ $\gamma$) is also called the \emph{left}
(resp.\ \emph{right}) \emph{adjoint}.

Given a function $f\colon \mathbb{C}\to\mathbb{C}$ on the concrete
values, we say that $f^\#\colon \mathbb{A}\to \mathbb{A}$ is an
over-approximation of $f$ whenever $\alpha\circ f\circ \gamma
\sqsubseteq f^\#$ (pointwise). Whenever equality holds $f^\#$ is the induced
over-approximation of $f$.

\subsection{Categories}

We will use standard concepts from category theory. Given an arrow
$f\colon A\to B$, we write $\dom(f)=A$, $\cod(f)=B$. For two arrows
$f \colon A \to B$, $g \colon B \to C$ we denote their composition by
${f;g} \colon A \to C$.
%

We will state our results in a general framework, allowing for easy generalization of our results to other applications.
An important type of category that we will focus on are cospan categories, which are particularly useful for reactive systems (to be defined later).
Given a base category $\mathbf D$ with
pushouts, the category $\Cospan(\mathbf D)$ has as objects the objects
of $\mathbf D$ and as arrows \emph{cospans}, which are equivalence
classes of pairs of arrows of the form
$\smash{A \xrightarrow{f_L} X \xleftarrow{f_R} B}$, where the middle
object is considered up to isomorphism. Cospan composition is
performed via pushouts\full{ (for details see Appendix~\ref{app:cospans})}.

A~cospan is \emph{left-linear} if its left leg $f_L$ is a
mono.  For adhesive categories~\cite{ls:adhesive-journal}, the
composition of left-linear cospans again yields a left-linear
cospan. $\ILC(\mathbf D)$ will denote the
subcategory of $\Cospan(\mathbf D)$ where the arrows are restricted to
left-linear cospans (historically called \emph{input}-linear; hence $\ILC$.)

Our running example is based on the category $\mathbf{D}=\graphf$, which has finite graphs as objects and graph morphisms as arrows.

\nosectionappendix
\begin{toappendix}
  \section{Additional Material for \S\ref{sec:preliminaries}\ (Preliminaries)}
  \paragraph*{Graphs and graph morphisms}

  We will define in more detail which graphs and graph morphisms we
  are using: in particular, a graph is a tuple $G = (V,E,s,t,\ell)$,
  where $V,E$ are sets of nodes respectively edges, $s,t\colon E\to V$
  are the source and target functions and $\ell\colon V\to \Lambda$
  (where $\Lambda$ is a set of labels) is the node labelling
  function. In the examples we will always omit node labels by
  assuming that there is only a single label.

  A graph $G$ is finite if both $V$ and $E$ are finite.

  Furthermore, given two graphs $G_i = (V_i,E_i,s_i,t_i,\ell_i)$,
  $i\in\{1,2\}$, a graph morphism $\phi\colon G_1\to G_2$ consists of
  two maps $\phi_V\colon V_1\to V_2$, $\phi_E\colon E_1\to E_2$ such
  that $\phi_V\circ s_1 = s_2\circ \phi_E$, $\phi_V\circ t_1 =
  t_2\circ \phi_E$ and $\ell_1 = \ell_2\circ\phi_V$.

  In the examples, the mapping of a morphism is given implicitly by
  the node identifiers: for instance, $
    \fcGraph{n1}{\node[gn] (n1) at (0,0) {$1$}; \node[gn] (n2) at (1,0) {$2$};}
    \rightarrow \fcGraph{n1}{
      \node[gn] (n1) at (0,0) {$1$};
      \node[gn] (n2) at (2,0) {$2$};
      \node[gn] (n3) at (1,0) {$3$};
      \draw[gedge] (n1) to (n3);
      \draw[gedge] (n2) to (n3);
    }
  $ adds the node identified by $3$ and adds two edges from the existing
  nodes identified by $1$ and $2$.

  \paragraph*{Cospans and cospan composition}
  \label{app:cospans}
  We compose two cospans \mbox{$%
    f \colon A \xrightarrow{f_L} X \xleftarrow{f_R} B$,} \mbox{$%
    g \colon B \xrightarrow{g_L} Y \xleftarrow{g_R} C$}
  by taking the pushout $(p_L, p_R)$ of $(f_R, g_L)$ as shown in
  \Cref{fig-cospan-compo}.  The result is the cospan
  \mbox{${f;g} \colon A \xrightarrow{f_L;p_L} Z \xleftarrow{g_R;p_R}
    C$}, where $Z$ is the pushout object of $f_R,\; g_L$.  We see an
  arrow $f \colon A \to C$ of $\Cospan(\mathbf D)$ as an object~$B$
  of~$\mathbf D$ equipped with two interfaces $A,C$ and corresponding
  arrows $f_L,f_R$ to relate the interfaces to $B$, and composition
  glues the inner objects of two cospans via their common
  interface.

  \begin{figure}[h]
    \centering\begin{tikzpicture}[x=1.10cm,y=1.10cm]

      \def\sqwo{1.2} \def\sqoo{1.4} \def\sqho{0.95}

      \node (a) at (-\sqwo-\sqoo,0) {$A$}; \node (x) at (-\sqwo,0)
      {$X$}; \node (b) at (0,\sqho) {$B$}; \node (y) at (\sqwo,0)
      {$Y$}; \node (c) at (\sqwo+\sqoo,0) {$C$}; \node (z) at
      (0,-\sqho) {$Z$};

      \def\smf{\footnotesize} \draw[cospanint] (a) --
      node[above]{\smf$f_L$} (x); \draw[cospanint] (b) --
      node[above,pos=0.7]{\smf$f_R\ $} (x); \draw[cospanint] (b) --
      node[above,pos=0.7]{\smf\kern5pt$g_L$} (y); \draw[cospanint] (c)
      -- node[above]{\smf$g_R$} (y);

      \draw[cospanint] (x) -- node[below,pos=0.4]{\smf$p_L$\kern8pt}
      (z); \draw[cospanint] (y) --
      node[below,pos=0.4]{\smf\kern12pt$p_R$} (z);

      \draw[cospanarr] (a) to[bend left=30] node[above,pos=0.2]{$f$}
      (b); \draw[cospanarr] (b) to[bend left=30]
      node[above,pos=0.8]{$\ g$} (c); \draw[cospanarr] (a) to[bend
      right=50] node[below]{$f;g$} (c);

      \node at (0,0) {\textup{(PO)}};

    \end{tikzpicture}%
    \caption{Composition of cospans $f$ and~$g$ is done via pushouts}%
    \label{fig-cospan-compo}%
  \end{figure}
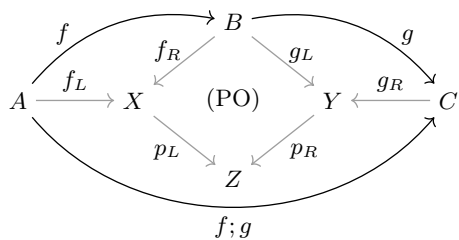
  In order to make sure that arrow composition in $\Cospan(\mathbf D)$
  is associative on the nose, we quotient cospans up to
  isomorphism. In more detail: two cospans
  $f \colon A \xrightarrow{f_L} X \xleftarrow{f_R} B$,
  $g \colon A \xrightarrow{g_L} Y \xleftarrow{g_R} B$ are equivalent
  whenever there exists an iso $\iota\colon X\to Y$ such that
  $f_L;\iota = g_L$, $f_R;\iota = g_R$. Then, arrows are equivalence
  classes of cospans.
\end{toappendix}



\subsection{Generalized Conditions}
\label{sec:reactive-systems-conditions}

As in previous work~\cite{sksclo:coinductive-satisfiability-nested-conditions}
we consider nested conditions --- from here on just called \emph{conditions} --- over an arbitrary category $\catC$ in
the spirit of reactive systems
\cite{lm:derive-bisimulation,l:congruences-reactive}.  Following
\cite{r:representing-fol,hp:correctness-nested-conditions}, we define
conditions as finite tree-like structures, where nodes are annotated
with quantifiers and objects, and edges are annotated with arrows.

\begin{definition}[Condition]
  \label{def:condition-new}
  Let $\catC$ be a category. A condition $\mathcal{A}$ over an object
  $A$ in $\catC$ is defined inductively as follows: it is either
  \begin{itemize}
    \item
      a finite conjunction of universals
      $\bigland\nolimits_{i \in \{1,\tdots,n\}} \forall f_i.\mathcal A_i
      = \forall f_1.\mathcal A_1 \land \tdots \land \forall f_n.\mathcal A_n$,~or
    \item
      a finite disjunction of existentials
      $\biglor\nolimits_{i \in \{1,\tdots,n\}} \exists f_i.\mathcal A_i
      = \exists f_1.\mathcal A_1 \lor \tdots \lor \exists f_n.\mathcal A_n$
  \end{itemize}
  where $f_i\colon A\to A_i$ are arrows in $\catC$ and
  $\mathcal A_i$ are conditions over $A_i$. We call
  $A = \RO(\mathcal A)$ the \emph{root object} of the condition
  $\mathcal A$.  Each subcondition $\mathcal Q f_i.\mathcal A_i$
  ($\mathcal Q \in \{\forall,\exists\}$) is called a \emph{child
    of~$\mathcal A$}.  The constants $\condtrue_A$ (empty conjunction) and
  $\condfalse_A$ (empty disjunction) serve as the base cases.
  We will omit subscripts in $\condtrue_A$ and $\condfalse_A$ when clear
  from the context. 
  The set
  of all conditions over $A$ is denoted by $\Cond_A$, and
  $\Arr_A$
  refers to the $A$-sourced arrows (i.e., potential models) of $\catC$.

\end{definition}
Instantiated with $\catC=\graphf$, conditions are equivalent to graph
conditions as defined in \cite{hp:correctness-nested-conditions}, and
equivalence to first-order logic has been shown in
\cite{r:representing-fol}.  Cospan conditions (with
$\catC=\ILC(\graphf)$) have previously been used
\cite{lmcs:8951,sksclo:coinductive-satisfiability-nested-conditions}.
Standard graph conditions can trivially be encoded into cospan
conditions, and cospan conditions can be translated to equivalent
graph conditions.

Intuitively, 
conditions check for the occurrence of certain subgraphs or patterns for which the context satisfies a child condition.
For instance, the cospan condition
\mbox{%
$\forall\ \emptyset \to
\fcGraph{n1}{\node[gn] (n1) at (0,0) {$1$}; \node[gn] (n2) at (1,0) {$2$}; \draw[gedge] (n1) to (n2); }
  \leftarrow \fcGraph{n1}{\node[gn] (n1) at (0,0) {$1$}; \node[gn] (n2) at (0.7,0) {$2$}; }
\dotEx
\fcGraph{n1}{\node[gn] (n1) at (0,0) {$1$}; \node[gn] (n2) at (0.7,0) {$2$}; }
\to
\fcGraph{n1}{\node[gn] (n1) at (0,0) {$1$}; \node[gn] (n2) at (1,0) {$2$}; \draw[gedge] (n2) to[bend left=10] (n1); }
\leftarrow \fcGraph{n1}{\node[gn] (n1) at (0,0) {$1$}; \node[gn] (n2) at (0.7,0) {$2$}; }
\dotTrue$}
requires that for every edge, a second edge in the reverse direction also exists.
For additional examples and discussion we refer to \cite{lmcs:8951}.

To be consistent with \cite{sksclo:coinductive-satisfiability-nested-conditions},
\Cref{def:condition-new} restricts conjunction to
universal and disjunction to existential subformulas; e.g.,
$\exists f.\mathcal A \land \exists g.\mathcal B$ is excluded.
However, conditions that violate this syntactic restriction can easily be rewritten ---
e.g., to $\forall \id.\exists f.\mathcal A \land \forall \id.\exists g.\mathcal B$ for the above example. Hence, in examples we sometimes write
$\mathcal{A}\land \mathcal{B}$ or $\mathcal{A}\lor \mathcal{B}$ for
arbitrary conditions.

\begin{definition}[Satisfaction]
  \label{def:satisfaction}
  Let $\mathcal A \in \Cond_A$ and  let $a \colon A \to B$ be an arrow.
  We define the satisfaction relation
  $a \models \mathcal A$ as follows:
  \begin{itemize}
  \item $a \models \bigland_{i\in I} \forall f_i.\mathcal A_i$
    iff for every  $i\in I$ and every arrow
    $g \colon \RO(\mathcal A_i) \to B$
    we have: if~$a = f_i;g$, then $g \models \mathcal A_i$.
  \item $a \models \biglor_i \exists f_i.\mathcal A_i$ iff there
    exists $i\in I$ and an arrow $g \colon \RO(\mathcal A_i) \to B$
    such that $a = f_i;g$ and $g \models \mathcal A_i$.
  \end{itemize}
\end{definition}
We define a concretization
  $\sem\_\colon\Cond_A\to\PArr{A}$ (for arbitrary $A$) via
  $\sem{\mathcal{A}} = \{x\in \Arr_{\RO(\mathcal A)}\mid x \models
  \mathcal{A}\}$, mapping conditions to the set of arrows that satisfy them.
From the above it follows that $\sem{\condtrue_A}=\Arr_A$ and $\sem{\condfalse_A}=\emptyset$.

We write $\mathcal A \models \mathcal B$ ($\mathcal{A}$ entails
$\mathcal{B}$) if $\RO(\mathcal A) = \RO(\mathcal B)$ and for every
arrow $a\in\Arr_{\RO(\mathcal A)}$ we
have: if $a \models \mathcal A$, then
$a \models \mathcal B$. We write $\mathcal A \equiv \mathcal B$ ($\mathcal A$ and $\mathcal B$ are equivalent) if $\mathcal A \models \mathcal B$
and $\mathcal B \models \mathcal A$.

Since conditions are equivalent to first-order logic
\cite{r:representing-fol} for $\catC=\graphf$, the
satisfiability, entailment and equivalence problems are undecidable,
but semi-decidable. In fact, in
\cite{sksclo:coinductive-satisfiability-nested-conditions} we have
provided a semi-decision procedure for satisfiability in the general
case, based on a predecessor technique for graph conditions
\cite{lo:tableau-graph-properties}.

\subsection{Conditional reactive systems}
\label{sec:reactive-systems}

We now define conditional reactive systems, which were introduced in
\cite{lm:derive-bisimulation} and extended with
application conditions in \cite{DBC-CRS}.   In our
definition, we closely follow \cite{lmcs:8951}.
We fix a distinguished object $0$ (not necessarily the initial object
in the category).

\begin{definition}[Reactive system rules]\label{def-reasys}%
  Let $\catC$ be a category with a distinguished object $\obnull$
  (not necessarily initial).  A \emph{rule} $\mathcal R=(\ell,r,\mathcal C)$ consists of
  arrows $\ell,r \colon {\obnull \to I}$ (called left-hand side and
  right-hand side) and a condition $\mathcal C$ with root object $I$.  A \emph{reactive system} is a set of rules.

  Let $\mathcal S$ be a reactive system and
  $a, a' \colon \obnull \to J$ be arrows. We say that $a$
  \emph{reduces to} $a'$ ($a \leadsto a'$) in $\mathcal{S}$ whenever there exists a
  rule $(\ell,r,\mathcal C) \in \mathcal S$ 
  and an arrow $c \colon I \to J$ (the \emph{reactive context}) such
  that $a = \ell;c,\ a' = r;c$ and $c \models \mathcal C$.
\end{definition}
\noindent
Note that $\cal C$ is not a pre- or post-application condition, but is
specified over the context in which the reaction takes place. Reactive systems instantiated with cospans
(where $0$ is the empty graphs) \cite{lmcs:8951,ss:reactive-cospans,s:deriving-congruences} yield
exactly double-pushout rewriting \cite{eps:gragra-algebraic}, hence reactive systems over $\ILC(\graphf)$ essentially describe DPO graph transformation systems with monic matching.

\begin{example}
  The reactive system over $\ILC(\graphf)$ having a single rule $\mathcal R$ (defined below) adds edges between arbitrary nodes, but only if such an edge does not already exist:
  \vskip-2pt

  \begin{align*}
    \mathcal R &= \Big(
      \tikz[remember picture,baseline=(a.base)]{
        \node (a) at (0,0) {$\emptyset$}; 
        \saveBboxNorthAs{reasys first example L0} }
        \rightarrow 
        \fcGraphRem{n1}{
          \node[gn] (n1) at (0,0) {$1$}; 
          \node[gn] (n2) at (1,0) {$2$}; 
          \saveBboxNorthAs{reasys first example L} }
        \leftarrow 
        \fcGraphRem{n1}{
          \node[gn] (n1) at (0,0) {$1$}; 
          \node[gn] (n2) at (1,0) {$2$}; 
          \saveBboxNorthAs{reasys first example I}  }
        \rightarrow 
        \fcGraphRem{n1}{
          \node[gn] (n1) at (0,0) {$1$}; 
          \node[gn] (n2) at (1,0) {$2$}; 
          \draw[gedge] (n1) to (n2); 
          \saveBboxNorthAs{reasys first example R}  }
      \leftarrow 
      \tikz[remember picture,baseline=(a.base)]{
        \node (a) at (0,0) {$\emptyset$}; 
        \saveBboxNorthAs{reasys first example R0} }
      ,\ \mathcal C
      \Big) \\
    \mathcal C &=
      \forall\ \fcGraph{n1}{\node[gn] (n1) at (0,0) {$1$}; \node[gn] (n2) at (1,0) {$2$}; }
      \rightarrow \fcGraph{n1}{\node[gn] (n1) at (0,0) {$1$}; \node[gn] (n2) at (1,0) {$2$}; \draw[gedge] (n1) to (n2); }
      \leftarrow \fcGraph{n1}{\node[gn] (n1) at (0,0) {$1$}; \node[gn] (n2) at (1,0) {$2$}; }
      \dotFalse
    \begin{tikzpicture}[remember picture, overlay]
      \draw[cospanarr] (reasys first example L0) to[bend left=27] node[above,pos=0.65]{$\ell$} ($(reasys first example I)+(-4pt,7pt)$);
      \draw[cospanarr] (reasys first example R0) to[bend right=27] node[above,pos=0.65]{$r$} ($(reasys first example I)+(4pt,7pt)$);
      \node[anchor=south] at (reasys first example L) {$L$};
      \node[anchor=south] at (reasys first example I) {$I$};
      \node[anchor=south] at (reasys first example R) {$R$};
    \end{tikzpicture}
  \end{align*}
\vskip-1.45\baselineskip
\end{example}

\subsection{Shift operation and Hoare logic}
\label{sec:shift-hoare}

Nested conditions are equipped with a shift operation. More
concretely, given $\mathcal{A}\in \Cond_A$, $c\colon A\to B$,
$\mathcal A_{\downarrow c}\in \Cond_B$ can be understood as a partial
evaluation of $\mathcal A$ under the assumption that an arrow $c$ is
already ``present''.  In particular,
it is defined as
$d \models \mathcal{A}_{\downarrow c} :\!\iff c;d \models
\mathcal{A}$. Here we do not delve into details, but just remark that
a shift can be computed via so-called representative squares\full{ (for
further information see the appendix)}.

\begin{toappendix}
We will now define the notion of representative squares, which
describe representative ways to close a span of arrows. They
generalize idem pushouts~\cite{lm:derive-bisimulation} and
borrowed context diagrams~\cite{ek:congruence-dpo-journal}.
They are needed to define the shift operation and subsequently the
construction of weakest preconditions and strongest postconditions.

\begin{definition}[Representative squares~\cite{bchk:conditional-reactive-systems}]\label{def:representative-squares}%
\floatingpicspaceright[4]{2.9cm}
\begin{floatingpicright}{2.8cm}%
  \hfill\begin{tikzpicture}[x=0.9cm,y=0.9cm,baseline=2pt]
    \node (a) at (0,0) {$A$};
    \node (b) at (1.5,0) {$B$};
    \node (c) at (0,-1.7) {$C$};
    \node (d) at (1.1,-1.3) {$D$};
    \node (ds) at (1.9,-2.0) {$D'$};
    \draw[->] (a) -- node[above,overlay]{$\alpha_1$} (b);
    \draw[->] (a) -- node[left]{$\alpha_2$} (c);

    \draw[->,fancydotted] (b) -- node[left,pos=0.3]{$\beta_1$} (d);
    \draw[->,fancydotted] (c) -- node[above,pos=0.35]{$\beta_2$} (d);

    \draw[->,fancydotted] ($(d.center)+(5pt,-5pt)$) -- ($(ds.center)+(-5pt,5pt)$);
    \node at ($(d.center)+(11pt,-4pt)$) {$\gamma$};
    \draw[->] (b) -- node[right]{$\delta_1$} (ds);
    \draw[->] (c) -- node[below,overlay]{$\delta_2$} (ds);
  \end{tikzpicture}
\end{floatingpicright}
A class $\kappa$ of commuting squares in a category $\catC$ is
\emph{represen\-tative}
if 
for every commuting square $\alpha_1 ; \delta_1 = \alpha_2 ; \delta_2$
in $\catC$ there exists a representative square
$\alpha_1 ; \beta_1 = \alpha_2 ; \beta_2$ in $\kappa$ and an arrow
$\gamma$ such that $\delta_1 = \beta_1;\gamma$ and
$\delta_2 = \beta_2;\gamma$.

  \floatingpicspaceright[1]{2.9cm} For two arrows
  $\alpha_1 \colon A \to B,\ \alpha_2 \colon A \to C$, we define
  $\kappa(\alpha_1,\alpha_2)$ as the set of pairs of arrows
  $(\beta_1,\beta_2)$ which, together with $\alpha_1,\alpha_2$, form
  representative squares in $\kappa$.
\end{definition}

Compared to weak pushouts, more than one square might be needed to represent all commuting squares that extend a given span $(\alpha_1,\alpha_2)$.
In categories with pushouts (such as $\graphf$), pushouts are the most
natural candidate for representative squares.  In $\graphfinj$
pushouts do not exist, but jointly epi squares can be used
instead. For cospan categories, one can use borrowed context
diagrams~\cite{ek:congruence-dpo-journal}\full{ (see
Appendix~\ref{app:borrowed-context} for a summary)}.

For many categories of interest -- such as $\graphf$ and
$\ILC(\graphf)$ -- we can guarantee a choice of $\kappa$ such that
each set $\kappa(\alpha_1,\alpha_2)$ is finite and computable.
In the rest of this paper, we assume that we work in such
a category, and use such a class $\kappa$. Hence the
constructions described below are effective since the finiteness of
the transformed conditions is preserved.

\paragraph*{Borrowed context diagrams}
  \label{app:borrowed-context}

For cospan categories over adhesive categories (such as $\ILC(\graphf)$), borrowed context
diagrams -- initially introduced as an extension of DPO
rewriting~\cite{ek:congruence-dpo-journal} -- can be used as representative squares. Before
we can introduce such diagrams, we first need the notion of jointly
epi.

\begin{definition}[Jointly epi]
  A pair of arrows $f \colon B \to D,\ g \colon C \to D$ is
  \emph{jointly epi} (\emph{JE}) if for each pair of arrows
  $d_1, d_2 \colon D \to E$ the following holds: if $f;d_1 = f;d_2$
  and $g;d_1 = g;d_2$, then $d_1 = d_2$.
\end{definition}

In $\graphf$ jointly epi equals jointly surjective, meaning
that each node or edge of $D$ is required to have a preimage under $f$
or $g$ or both ($D$ contains only images of $B$ or~$C$).

This criterion is similar to, but weaker than a pushout:
For jointly epi morphisms $d_1 \colon B \to D,\ d_2 \colon C \to D$,
there are no restrictions on which elements of $B,C$ can be merged in~$D$.
However, in a pushout constructed from morphisms $a_1 \colon A \to B,\ a_2 \colon A \to C$,
elements in~$D$ can (and must) only be merged if they have a common preimage in $A$.
(Hence every pushout generates a pair of jointly epi arrows, but not vice versa.)

\begin{definition}[Borrowed context diagram~\cite{DBC-CRS}]
  A commuting diagram in the category $\ILC(\catC)$, where $\catC$ is
  adhesive, is a \emph{borrowed context diagram} whenever it has the
  form of the diagram shown in \Cref{fig-bc-diag}, and the four
  squares in the base category $\catC$ are pushout (PO), pullback (PB)
  or jointly~epi (JE) as indicated. Arrows depicted as
  $\rightarrowtail$ are mono. In particular, $L\rightarrowtail G^+$,
  $G\rightarrowtail G^+$ must be jointly epi.
\end{definition}

\Cref{fig-bc-venn} shows a more concrete version of
\Cref{fig-bc-diag}, where graphs and their overlaps are depicted by
Venn diagrams (assuming that all morphisms are injective). Because of
the two pushout squares, this diagram can be interpreted as
composition of cospans $a;f = \ell;c = D \rightarrow G^+ \leftarrow K$
with extra conditions on the top left and the bottom right square.
The top left square fixes an overlap $G^+$ of $L$ and $G$, while $D$
is contained in the intersection of $L$ and $G$ (shown as a hatched
area). Being jointly epi ensures that it really is an overlap and does
not contain unrelated elements.
The top right pushout corresponds to the left
pushout of a DPO rewriting diagram. It contains a total match of $L$
in $G^+$.  Then, the bottom left pushout gives us the minimal borrowed
context $F$ such that applying the rule becomes possible.
The top left and the bottom left squares together ensure that the contexts to be considered are not larger than necessary.
The bottom right pullback ensures that the interface $K$ is as large
as possible.

For more concrete examples of borrowed context diagrams, we refer to~\cite{ek:congruence-dpo-journal,lmcs:8951}.

\begin{figure}[t]
  \begin{subfigure}[b]{0.45\textwidth}
    \centering\begin{tikzpicture}[x=1.10cm,y=1.10cm]
      \def\sqw{1.5}
      \def\sqh{1.48}
      \pgfmathsetmacro\sqww{2*\sqw}
      \pgfmathsetmacro\sqhh{2*\sqh}

      \path[use as bounding box] (-0.7,0.8) rectangle (2*\sqw+0.7,-2*\sqh-0.87);

      \node (d) at (0,0) {$D$};
      \node (l) at (\sqw,0) {$L$};
      \node (i) at (\sqww,0) {$I$};
      \node (g) at (0,-\sqh) {$G$};
      \node (gp) at (\sqw,-\sqh) {$\mkern-3mu G^+\mkern-5mu$};
      \node (c) at (\sqww,-\sqh) {$C$};
      \node (j) at (0,-\sqhh) {$J$};
      \node (f) at (\sqw,-\sqhh) {$F$};
      \node (k) at (\sqww,-\sqhh) {$K$};

      \def\sqlfs{\footnotesize}
      \node at (0.5*\sqw,-0.5*\sqh) {\sqlfs\textup{JE}};
      \node at (1.5*\sqw,-0.5*\sqh) {\sqlfs\textup{PO}};
      \node at (0.5*\sqw,-1.5*\sqh) {\sqlfs\textup{PO}};
      \node at (1.5*\sqw,-1.5*\sqh) {\sqlfs\textup{PB}};

      \draw[cospanintmono]
        (d) edge (l) edge (g)
        (l) edge (gp)
        (g) edge (gp)
        (i) edge (c)
        (j) edge (f);
      \draw[cospanint]
        (j) edge (g)
        (k) edge (f) edge (c)
        (f) edge (gp)
        (i) edge (l)
        (c) edge (gp);

      \draw[cospanarr] (d.north east) to[bend left =10] node[above]{$\ell$} (i.north west);
      \draw[cospanarr] (d.south west) to[bend right=10] node[left ]{$a$} (j.north west);
      \draw[cospanarr] (i.south east) to[bend left =10] node[right]{$c$} (k.north east);
      \draw[cospanarr] (j.south east) to[bend right=10] node[below]{$f$} (k.south west);
    \end{tikzpicture}%
    \caption{Structure of a borrowed context diagram.
    The inner, lighter arrows are morphisms of the base category $\catC$,
    while the outer arrows are morphisms of $\ILC(\catC)$.
    }%
    \label{fig-bc-diag}%
  \end{subfigure}\hfill
  \begin{subfigure}[b]{0.52\textwidth}
    \centering
    \begin{tikzpicture}[scale=0.8]
      \def\heightAdjustA{1.00}
      \def\heightAdjustB{1.00}
      \pgfmathsetmacro\posX{2.6}
      \pgfmathsetmacro\posY{-1.85}
      \pgfmathsetmacro\boxSxp{0.9}
      \pgfmathsetmacro\boxSxm{0.65}
      \pgfmathsetmacro\boxSyp{0.56*\heightAdjustB}
      \pgfmathsetmacro\boxSym{0.60*\heightAdjustB}
      \pgfmathsetmacro\boxW{\boxSxp+\boxSxm}
      \pgfmathsetmacro\boxH{\boxSyp+\boxSym}

      \def\circGinner{((0.05,0) ellipse[x radius=0.53, y radius=0.24]}
      \def\circGouter{((0.05,0) ellipse[x radius=0.65, y radius=0.36]}
      \def\circLinner{((0.6,0.14) circle[radius=0.28]}
      \def\circLouter{((0.6,0.14) circle[radius=0.4]}
      \def\allArea{(-\boxSxm,-\boxSym) rectangle (1.1,\boxSyp)} 

      \newcommand{\drawgl}{%
        \draw \circGinner;
        \draw \circGouter;
        \draw \circLinner;
        \draw \circLouter;
      }

      \tikzset{vennfill/.style={fill=black!40}}

      \newcommand{\fillboth}{%
        \fill[vennfill] \circGouter;
        \fill[vennfill] \circLouter;
      }

      \newcommand{\boxat}[5]{
        \begin{scope}[shift={(#1*\posX,#2*\posY*\heightAdjustA)}]
          \node (#3) at ($(\boxSxp,\boxSyp)!.5!(-\boxSxm,-\boxSym)$) [draw,roundbox,cospanintcommon,minimum width=\boxW cm, minimum height=\boxH cm] {};
          \node[anchor=south west] at ($(-\boxSxm,-\boxSym)+(-0.12,-0.15)$) {$\scriptstyle{#4}$};

          \begin{scope}[even odd rule]
            #5
          \end{scope}
          \drawgl
        \end{scope}
      }

      \boxat00DD{
        \clip \circLouter;
        \clip \circGouter;
        \foreach \d in {0,1,...,20} {
          \draw[black!40, line width=0.7pt] (0, -1cm + 0.3pt + \d*2pt) -- +(1cm,1cm);
        }}

      \boxat10LL{
        \fill[vennfill] \circLouter;}

      \boxat20II{
        \clip \circLinner \allArea;
        \fill[vennfill] \circLouter;}

      \boxat01GG{
        \fill[vennfill] \circGouter;}

      \boxat11{Gp}{G^+}{
        \fillboth}

      \boxat21CC{
        \clip \circLinner \allArea;
        \fillboth}

      \boxat02JJ{
        \clip \circGinner \allArea;
        \fill[vennfill] \circGouter;}

      \boxat12FF{
        \clip \circGinner \allArea;
        \fillboth}

      \boxat22KK{
        \clip \circGinner \allArea;
        \clip \circLinner \allArea;
        \fillboth}

      \def\arrowfromto#1#2#3{
        \draw[cospanint] (#1) edge (#2);
      }
      \def\cospanfromto#1#2#3{
        \arrowfromto{#1}{#2}{mono}
        \arrowfromto{#3}{#2}{}
      }
      \def\polabels#1#2#3#4{
        \def\sqlfs{\scriptsize\color{gray}}
        \node at ($(D)!0.5!(Gp)$) {\sqlfs\textup{#1}};
        \node at ($(L)!0.5!(C)$) {\sqlfs\textup{#2}};
        \node at ($(G)!0.5!(F)$) {\sqlfs\textup{#3}};
        \node at ($(Gp)!0.5!(K)$) {\sqlfs\textup{#4}};
      }

      \cospanfromto{D}{L}{I} 
      \cospanfromto{D}{G}{J} 
      \cospanfromto{J}{F}{K}
      \cospanfromto{I}{C}{K}
      \cospanfromto{L}{Gp}{F}
      \cospanfromto{G}{Gp}{C}
      \polabels{JE}{PO}{PO}{PB}
    \end{tikzpicture}
    \caption{Borrowed context diagrams represented as Venn diagrams.
    The outer circles represent graphs $L,G$, and
    the area between the inner and outer circles represents their interfaces $I,J$.}
    \label{fig-bc-venn}
  \end{subfigure}
  \caption{Borrowed context diagrams}
\end{figure}

For cospan categories over adhesive categories, borrowed context
diagrams form a represen\-tative class of
squares~\cite{bchk:conditional-reactive-systems}. Furthermore, for
some categories (such as $\graphfinj$), there are -- up to isomorphism
-- only finitely many jointly epi squares for a given span of monos
and hence only finitely many borrowed context diagrams given $a,\ell$
(since pushout complements along monos in adhesive categories are
unique up to isomorphism).

Whenever the two cospans $\ell,a$ are in $\ILC(\graphfinj)$, it is
easy to see that $f,c$ are in  $\ILC(\graphfinj)$, i.e., they consist
only of monos, i.e., injective morphisms.

Note also that representative squares in $\graphfinj$ are simply
jointly epi squares and they can be straighforwardly extended to
squares of $\ILC(\graphfinj)$.
%
%
%

One central operation is the shift of a condition along an arrow.  The
name shift is taken from an analogous operation for nested application
conditions (see \cite{p:development-correct-gts}).

\begin{definition}[Shift of a Condition]
  \label{def:shift}
  \label{prop:shift}
  Given a fixed class of representative squares $\kappa$,
  the \emph{shift of a condition $\mathcal A$ along an arrow
  $c \colon \RO(\mathcal A) \to B$} is inductively defined as follows:
  \[
    \Big( \bigland_{i \in I} \forall f_i.\mathcal A_i
    \Big)_{\downarrow c} = \bigland_{i \in
        I}\bigland_{(\alpha,\beta) \in \kappa(f_i,c)}
    \forall \beta.({\mathcal A_i}_{\downarrow \alpha})
    \mkern100mu\hfill
    \begin{tikzpicture}[baseline={(0,-0.3)}]
      \path[use as bounding box] (-0.2,0) rectangle (1.6,-0.75);
      \node (sqtl) at (0,0) {};
      \node (sqtr) at (1.25,0) {};
      \node (sqbl) at (0,-0.9) {};
      \node (sqbr) at (1.25,-0.9) {};

      \draw[->] (sqtl) -- node[overlay,above]{$f_i$} (sqtr);
      \draw[->] (sqtl) -- node[left]{$c$} (sqbl);
      \draw[->] (sqtr) -- node[right]{$\alpha$} (sqbr);
      \draw[->] (sqbl) -- node[overlay,below]{$\beta$} (sqbr);
    \end{tikzpicture}
  \]
  Shifting of existential conditions is performed analogously.
\end{definition}


While the
representation of the shifted condition may differ depending on the
chosen class of representative squares, the resulting conditions are
equivalent. Since we assume that each set $\kappa(f_i,c)$ is finite,
shifting a finite condition will again result in a finite condition.
\end{toappendix}

As an example in $\graphfinj$ (the subcategory of $\graphf$ with only injective graph morphisms), shifting $\forall\ \emptyset \to
\fcGraph{n1}{\node[gn] (n1) at (0,0) {$1$}; }
\dotEx
\fcGraph{n1}{\node[gn] (n1) at (0,0) {$1$}; }
\to
\fcGraph{n1}{\node[gn] (n1) at (0,0) {$1$}; \node[gn] (n2) at (1,0) {$2$}; \draw[gedge] (n1) to (n2); }
\dotTrue$
(every node has an outgoing edge)
over
$\emptyset \to \fcGraph{n0}{\node[gn,double] (n0) at (0,0) {\phantom1}; }$
(a node exists)
yields
\begin{align*}
&\forall\ %
\fcGraph{n0}{\node[gn,double] (n0) at (0,0) {\phantom1}; }
\to
\fcGraph{n0}{\node[gn,double] (n0) at (0,0) {\phantom1}; }
\dotEx
\fcGraph{n0}{\node[gn,double] (n0) at (0,0) {\phantom1}; }
\to
\fcGraph{n1}{\node[gn,double] (n0) at (0,0) {}; \node[gn] (n1) at (1,0) {$2$}; \draw[gedge] (n0) to (n1); }
\dotTrue \\
\land\ & \forall\ %
\fcGraph{n0}{\node[gn,double] (n0) at (0,0) {\phantom1}; }
\to
\fcGraph{n1}{\node[gn,double] (n0) at (-0.75,0) {}; \node[gn] (n1) at (0,0) {$1$}; }
\ .\big(
\exists\ %
\fcGraph{n1}{\node[gn,double] (n0) at (-0.75,0) {}; \node[gn] (n1) at (0,0) {$1$}; }
\to
\fcGraph{n1}{\node[gn,double] (n0) at (-0.75,0) {}; \node[gn] (n1) at (0,0) {$1$}; \node[gn] (n2) at (1,0) {$2$}; \draw[gedge] (n1) to (n2); }
\dotTrue \lor \exists\ %
\fcGraph{n1}{\node[gn,double] (n0) at (-0.75,0) {}; \node[gn] (n1) at (0,0) {$1$}; }
\to
\fcGraph{n1}{\node[gn,double] (n0) at (-1,0) {}; \node[gn] (n1) at (0,0) {$1$}; \draw[gedge] (n1) to (n0); }
\dotTrue \big)
\end{align*}
(the designated node has an outgoing edge, and so does every other
node, possibly to the designated node).  This example can be lifted to
$\ILC(\graphf)$ by replacing all graph morphisms $A \xrightarrow{m} B$
with cospans $A \xrightarrow{m} B \xleftarrow{\id} B$.

\begin{toappendix}
  \paragraph*{Visualization of shifts}
  Given a condition $\mathcal{A}$ and an arrow $c\colon A =
  \RO(\mathcal{A})\to B$, we will visualize shifts in diagrams as follows:

\noindent\mbox{}\hfill\begin{tikzpicture}
  \def\sqw{1.75}
  \node (a) at (0,0) {$A$}; \node (b) at (1*\sqw,0) {$B$}; \node (x)
  at (2*\sqw,0) {$X$};
  \draw[->] (a) -- node[above]{$c$} (b); \draw[->] (b) --
  node[above]{$d$} (x); \node[condtri,dart tip angle=30,shape border
  rotate=270,rotate around={-10:(a.center)},
  label={[rotate=0,anchor=south,label
    distance=1pt]above:{$\mkern7mu\mathcal
      A\vphantom{{}_{\downarrow}}$}}] at (a.north) {\kern3pt};
  \node[condtri,dart tip angle=30,shape border rotate=270,rotate
  around={-10:(b.center)}, label={[rotate=0,anchor=south,label
    distance=1pt]above:{$\mathcal A_{\downarrow c}$}}] at (b.north)
  {\kern3pt};
\end{tikzpicture}\hfill\mbox{}

Remember that for an arrow $d\colon B\to X$ it holds that $d\models
\mathcal{A}_{\downarrow c} \iff c;d \models \mathcal{A}$.
\end{toappendix}

In order to compute successor states for graph conditions, we need the
concepts of Hoare triple, (strongest) postconditions and (weakest)
preconditions that is based on the shift operation.

\begin{definition}[Hoare triple, weakest precondition, strongest postcondition~\cite{bchk:conditional-reactive-systems}]
  Let $\mathcal R = (\ell,r,\mathcal C)$ be a rule and let $\mathcal A, \mathcal B$ be conditions.
  We say that $\mathcal A, \mathcal R, \mathcal B$ form a \emph{Hoare
    triple} -- written as $\{ \mathcal A \} \mathcal R \{ \mathcal B
  \}$ -- if for all $a,b \colon \obnull \to J$ with $a \models \mathcal A$ and $a \leadsto_{\mathcal R} b$ we have that $b \models \mathcal B$.

  $\mathcal A$ is a \emph{precondition} for $\mathcal R$ and $\mathcal B$ whenever $\{\mathcal A\} \mathcal R \{\mathcal B\}$.
  Similarly, $\mathcal B$ is called a \emph{postcondition} for $\mathcal A$ and $\mathcal R$.

  $\mathcal A$ is the \emph{weakest precondition} for $\mathcal R$ and $\mathcal B$ (written $\WP(\mathcal R, \mathcal B)$) whenever it is a precondition and for every other precondition $\mathcal A'$ we have that $\mathcal A' \models \mathcal A$.

  $\mathcal B$ is the \emph{strongest postcondition} for $\mathcal A$ and $\mathcal R$ (written $\SP(\mathcal A, \mathcal R)$) whenever it is a postcondition and for every other postcondition $\mathcal B'$ we have that $\mathcal B \models \mathcal B'$.
\end{definition}

It is easy to see that $\{\mathcal A\} \mathcal R \{\mathcal B\}$ iff
$\mathcal{A}\models \WP(\mathcal R, \mathcal B)$ iff
$\SP(\mathcal A, \mathcal R)\models \mathcal{B}$. Furthermore all
notions can be generalized to traces, i.e., sequences of rules,
instead of single rules $\cal R$. 

\begin{proposition}[Computing weakest preconditions and strongest postconditions~\cite{bchk:conditional-reactive-systems}]
  \label{def:compute-sp-wp}
  Let $\mathcal R=(\ell,r,\mathcal C)$ be a rule and let $\mathcal A, \mathcal B$ be conditions.
  Then
  $\WP(\mathcal R, \mathcal B) \equiv \forall \ell . (\mathcal C \rightarrow \mathcal B_{\downarrow r})$
  and
  $\SP(\mathcal A, \mathcal R) \equiv \exists r . (\mathcal C \land \mathcal A_{\downarrow \ell})$
\end{proposition}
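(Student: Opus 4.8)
The plan is to prove both equivalences by directly unfolding the semantic definitions, exploiting the defining property of the shift that $d \models \mathcal A_{\downarrow c} \iff c;d \models \mathcal A$. First I would record the typing: since a state is an arrow $a \colon 0 \to J$, every pre- and postcondition has root object $0$, and one checks that $\forall \ell.(\mathcal C \to \mathcal B_{\downarrow r})$ and $\exists r.(\mathcal C \land \mathcal A_{\downarrow \ell})$ also have root object $0$ (using $\ell, r \colon 0 \to I$, $\RO(\mathcal C) = I$, and the fact that shifting along $r$ resp.\ $\ell$ produces a condition over $I$). This makes the claimed entailments and the $\equiv$ well-typed, since entailment is only defined between conditions with equal root object.

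Second, for the weakest precondition I would write $\mathcal W = \forall \ell.(\mathcal C \to \mathcal B_{\downarrow r})$ and unfold satisfaction at an arbitrary state $a \colon 0 \to J$: by \Cref{def:satisfaction} and the shift property, $a \models \mathcal W$ holds exactly when, for every context $c \colon I \to J$ with $a = \ell;c$ and $c \models \mathcal C$, one has $r;c \models \mathcal B$. This characterization mirrors \Cref{def-reasys} verbatim, since a step $a \leadsto_{\mathcal R} b$ is precisely the existence of such a $c$ with $b = r;c$. From this I would derive two facts: (i) $\{\mathcal W\}\mathcal R\{\mathcal B\}$, by taking any step $a \leadsto b$ with witness $c$ and reading off $b = r;c \models \mathcal B$; and (ii) $\mathcal W$ is weakest, by taking any precondition $\mathcal A'$, any $a \models \mathcal A'$, and any candidate context $c$ as above, forming the step $a \leadsto r;c$ and invoking $\{\mathcal A'\}\mathcal R\{\mathcal B\}$ to conclude $r;c \models \mathcal B$, whence $a \models \mathcal W$, i.e.\ $\mathcal A' \models \mathcal W$. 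Together these exhibit $\mathcal W$ as a weakest precondition; since $\WP(\mathcal R, \mathcal B)$ is the weakest precondition, unique up to $\equiv$, we get $\mathcal W \equiv \WP(\mathcal R, \mathcal B)$.

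Third, the strongest postcondition is entirely dual. Writing $\mathcal S = \exists r.(\mathcal C \land \mathcal A_{\downarrow \ell})$ and unfolding at $b \colon 0 \to J$ gives $b \models \mathcal S$ iff there is a context $c \colon I \to J$ with $b = r;c$, $c \models \mathcal C$ and $\ell;c \models \mathcal A$. The same dictionary with \Cref{def-reasys} then yields (i) $\{\mathcal A\}\mathcal R\{\mathcal S\}$, using the reduction's own context as the existential witness, and (ii) that any postcondition $\mathcal B'$ satisfies $\mathcal S \models \mathcal B'$, by turning a witness $c$ for $b \models \mathcal S$ into a step $\ell;c \leadsto b$ with $\ell;c \models \mathcal A$ and applying $\{\mathcal A\}\mathcal R\{\mathcal B'\}$. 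Hence $\mathcal S$ is a strongest postcondition and $\mathcal S \equiv \SP(\mathcal A, \mathcal R)$.

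I do not expect a genuine obstacle here; the content is bookkeeping. The one point requiring care is that the reactive context $c$ must play a double role — as the universally resp.\ existentially quantified arrow of the nested condition and simultaneously as the reaction context of \Cref{def-reasys} — so that the quantifier structure of the formula lines up with the one-step relation exactly. Getting this correspondence right, together with the root-object typing so that the ``for every other (pre/post)condition'' entailments are well-formed, is the only subtlety. A secondary remark worth stating is that the argument produces a concrete weakest precondition and strongest postcondition, thereby also establishing their \emph{existence}, not merely their form.
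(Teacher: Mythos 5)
Your proof is correct. Note, however, that the paper does not prove \Cref{def:compute-sp-wp} at all: it is stated as an imported result, cited from prior work on conditional reactive systems, so there is no in-paper proof to compare against. Your argument is the standard one and is sound: unfolding \Cref{def:satisfaction} together with the defining property of shift identifies $a \models \forall \ell.(\mathcal C \to \mathcal B_{\downarrow r})$ with ``every $\mathcal R$-step out of $a$ lands in $\sem{\mathcal B}$,'' which simultaneously gives that the formula is a precondition and that every precondition entails it; the $\SP$ half is dual. Indeed, the chain of equivalences you write for $\SP$ is exactly the computation the paper does carry out in its proof of \Cref{lemma-rule-of-a-is-fo-definable}, so that half of your argument is implicitly present in the paper. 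Your observation that the construction also establishes \emph{existence} of $\WP$ and $\SP$ (which the definition alone does not guarantee) is a worthwhile point. The only detail you leave implicit is that $\mathcal C \to \mathcal B_{\downarrow r}$ (and negation generally) is not a primitive of \Cref{def:condition-new} and must be read as the dualized condition obtained by swapping quantifiers and connectives --- the paper glosses over this too, so it is not a gap relative to the source, but it deserves a sentence if the proof is to be self-contained.
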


For instance, for the motivating example in \Cref{sec:motivation}, the strongest postcondition of the first step is the following.
%
\begin{align*}
  & \SP(\Init_1 \land \neg\Bad, \append) =
  \exists\ \emptyset \to
    \fcGraph{n1}{
      \node[gn] (n1) at (0,0) {$1$};
      \node[gn] (n2) at (1,0) {$2$};
      \draw[gedge] (n1) to (n2)
                   (n2) edge[exaloop right] (n2);
    }
    \leftarrow \fcGraph{n1}{ \node[gn] (n1) at (0,0) {$1$}; }
    \ . \big( \\
    & \quad
      \phantom{\land\ }
      \forall\ \fcGraph{n1}{ \node[gn] (n1) at (0,0) {$1$}; }
        \rightarrow \fcGraph{n1}{
          \node[gn] (n1) at (0,0) {$1$};
          \node[gn] (n3) at (1,0) {$3$};
          \draw[gedge] (n1) to (n3);
        } \leftarrow \fcGraph{n1}{ \node[gn] (n1) at (0,0) {$1$}; \node[gn] (n3) at (0.8,0) {$3$}; }
      \dotFalse \mkern47.3mu
      \land \forall\ \fcGraph{n1}{ \node[gn] (n1) at (0,0) {$1$}; }
        \rightarrow \fcGraph{n1}{
          \node[gn] (n1) at (0,0) {$1$};
          \node[gn] (n3) at (1,0) {$3$};
          \draw[gedge] (n1) edge[exaloop above] (n1)
                       (n1) to (n3);
        } \leftarrow \fcGraph{n1}{ \node[gn] (n1) at (0,0) {$1$}; \node[gn] (n3) at (0.8,0) {$3$}; }
      \dotFalse \\
    & \quad
      \land \forall\ \fcGraph{n1}{ \node[gn] (n1) at (0,0) {$1$}; }
        \rightarrow \fcGraph{n1}{
          \node[gn] (n0) at (0,0) {$0$};
          \node[gn] (n1) at (1,0) {$1$};
          \draw[gedge] (n0) to (n1);
        } \leftarrow \fcGraph{n1}{ \node[gn] (n0) at (0,0) {$0$}; \node[gn] (n1) at (0.8,0) {$1$}; }
      \dotFalse \mkern47.3mu
      \land \forall\ \fcGraph{n1}{ \node[gn] (n1) at (0,0) {$1$}; }
        \rightarrow \fcGraph{n1}{
          \node[gn] (n0) at (0,0) {$0$};
          \node[gn] (n1) at (1,0) {$1$};
          \draw[gedge] (n0) edge[exaloop above] (n0)
                       (n0) to (n1);
        } \leftarrow \fcGraph{n1}{ \node[gn] (n0) at (0,0) {$0$}; \node[gn] (n1) at (0.8,0) {$1$}; }
      \dotFalse \\
    & \quad
      \land \forall\ \fcGraph{n1}{ \node[gn] (n1) at (0,0) {$1$}; }
        \rightarrow \fcGraph{n1}{
          \node[gn] (n1) at (0,0) {$1$};
          \node[gn] (n3) at (0.8,0) {$3$};
          \node[gn] (n4) at (1.8,0) {$4$};
          \draw[gedge] (n3) to (n4);
        } \leftarrow \fcGraph{n1}{ \node[gn] (n1) at (0,0) {$1$}; \node[gn] (n3) at (0.8,0) {$3$}; \node[gn] (n4) at (1.6,0) {$4$}; }
      \dotFalse
      \land \forall\ \fcGraph{n1}{ \node[gn] (n1) at (0,0) {$1$}; }
        \rightarrow \fcGraph{n1}{
          \node[gn] (n1) at (0,0) {$1$};
          \node[gn] (n3) at (0.8,0) {$3$};
          \node[gn] (n4) at (1.8,0) {$4$};
          \draw[gedge] (n3) edge[exaloop above] (n3)
                       (n3) to (n4);
        } \leftarrow \fcGraph{n1}{ \node[gn] (n1) at (0,0) {$1$}; \node[gn] (n3) at (0.8,0) {$3$}; \node[gn] (n4) at (1.6,0) {$4$}; }
      \dotFalse
    \big)
\end{align*}
Essentially, this states that a list with one element must exist, and condition $\Init_1 \land \neg\Bad$ has to hold for both the second-to-last list element that has just been added, and any other list elements that might exist.
Note that the three subconditions in the right column are already ``covered'' by the three ones on the left, and could be removed to obtain a smaller but equivalent condition.

\section{GTS verification using predicate abstraction}

We are now in a position to formalize the method outlined and illustrated in \Cref{sec:motivation}. To reiterate: we want to answer verification questions of the form ``from a given initial system state, is it possible to reach a state where a given (undesirable) property holds?'' --- where, for us, states are elements of $\Arr_0$ in an arbitrary base category $\catC$ with distinguished object $0$.

\subsection{Concrete transition systems}

As a first observation, in practice we are interested in answering the verification question for a \emph{family} of initial system states, and not just a single one. We therefore immediately generalise the formal notion of states and transitions to sets of arrows, with a disjunctive interpretation: a system being ``in'' a state means that it is described by \emph{one} of the elements of that state.

\begin{definition}[Set-based transition system, $\RULE(\HOLE\mkern-0.5mu,\mkern-1mu\mathcal R\mkern-1.5mu)$, correctness]
  Given a reactive system $\mathcal S$, a \emph{set-based transition
  system} is a tuple $T=\tup{Q,{\rightarrow},X_0}$, where $Q \subseteq \PArr{0}$ is the set of states, $X_0\in Q$ is the initial state, and ${\rightarrow}\subseteq Q\times \mathcal S\times Q$ is the 
  transition relation, defined by $X\xrightarrow{\mathcal R} \RULE(X,\mathcal R)$ where $\RULE(X,\mathcal R) \defeq \{ y \mid x \in X,\ x \leadsto_{\mathcal R} y \}$ for arbitrary $X\in Q$ and $\mathcal R\in \mathcal S$.
  
  $T$ is called \emph{correct} with respect to a given condition $\Bad\in\Cond_0$ if $Y\cap \sem\Bad=\emptyset$ for all states $Y$ reachable from $X_0$.
\end{definition}

For our running example we obtain the following set-based transition
system, starting from the set of arrows that satisfy $\Init_2$ (rule $\mathcal R = \append$):
\[
  \{
    \emptyset,\ %
    \fcGraph{n1}{\node[gn] (n1) at (0,0) {$\phantom1$}; },\ %
    \fcGraph{n1}{\node[gn] (n1) at (0,0) {$\phantom1$}; \draw[gedge] (n1) edge[exaloop above] (n1); },\ %
    \fcGraph{n1}{\node[gn] (n1) at (0,0) {$\phantom1$}; \node[gn] (n2) at (0.75,0) {$\phantom1$}; },\ %
    \fcGraph{n1}{\node[gn] (n1) at (0,0) {$\phantom1$}; \node[gn] (n2) at (0.75,0) {$\phantom1$}; \draw[gedge] (n2) edge[exaloop above] (n2); },\ %
    \fcGraph{n1}{\node[gn] (n1) at (0,0) {$\phantom1$}; \node[gn] (n2) at (0.75,0) {$\phantom1$}; \draw[gedge] (n1) edge[exaloop above] (n1); \draw[gedge] (n2) edge[exaloop above] (n2); },\ %
    \tdots \}
  \xrightarrow{\mathcal R}
  \{\ %
    \fcGraph{n1}{\node[gn] (n1) at (0,0) {$\phantom1$}; \node[gn] (n2) at (1,0) {$\phantom1$}; \draw[gedge] (n2) edge[exaloop above] (n2); \draw[gedge] (n1) edge (n2); },\ %
    \fcGraph{n1}{\node[gn] (n1) at (0,0) {$\phantom1$}; \node[gn] (n2) at (1,0) {$\phantom1$}; \draw[gedge] (n2) edge[exaloop above] (n2); \draw[gedge] (n1) edge (n2); \node[gn] (n3) at (1.75,0) {$\phantom1$}; },\ %
    \fcGraph{n1}{\node[gn] (n1) at (0,0) {$\phantom1$}; \node[gn] (n2) at (1,0) {$\phantom1$}; \draw[gedge] (n2) edge[exaloop above] (n2); \draw[gedge] (n1) edge (n2); \node[gn] (n3) at (1.75,0) {$\phantom1$}; \draw[gedge] (n3) edge[exaloop above] (n3); },\ %
    \tdots \}
  \xrightarrow{\mathcal R}
  \cdots
\]

A set-based transition system is \emph{induced} by some condition
$\Init$ if $X_0=\sem\Init$ and $Q$ is the smallest subset of $\PArr 0$
reachable from $X_0$. For an induced set-based transition system, the
verification question therefore asks whether any
reachable set $Y$ intersects with $\sem{\Bad}$ for a condition $\Bad$.

Since individual states as well as the set of all states can be
infinite, verification on set-based transition systems is in general
infeasible. Note, however, that the problem of checking whether a
system is \emph{in}correct (a bad state is reachable), is in fact
semi-decidable for graphs (rewriting in $\ILC(\graphf)$): we can
enumerate all graphs satisfying $\Init$ and while doing this in
parallel enumerate the reachable graphs. Once we detect a graph
satisfying $\Bad$, we can give the respective answer, i.e., the system
is not correct. However, it is well-known that graph transformation
systems can encode Turing machines~\cite{hp:languages-gts} and hence
the problem is undecidable. Here we are interested in developing a
technique that can (in some cases) definitely show that the system
under consideration is in fact correct.

A first step towards a verification method is to use conditions
(which have finite size) as a representation for (first-order definable) sets of arrows. This step is justified by the following result, which implies that the set of condition-definable sets of arrows is closed under rule application.

\begin{lemmarep}
  \label{lemma-rule-of-a-is-fo-definable}
  For any condition $\cal A$ and rule $\mathcal R$,
  $\sem{\SP(\mathcal A, \mathcal R)} =
  \RULE(\sem{\mathcal A}, \mathcal R)$. 
\end{lemmarep}

\begin{proof}
  \begin{align*}
    g \in \sem{\SP(\mathcal A, (\ell,r,\mathcal C))}
    &\iff g \models \SP(\mathcal A, (\ell,r,\mathcal C)) \\
    &\iff g \models \exists r . (\mathcal C \land \mathcal A_{\downarrow \ell}) \\
    \text{(Def. $\models$)} &\iff \exists c \colon g = r;c \land c \models \mathcal C \land c \models \mathcal A_{\downarrow \ell} \\
    \text{(Def. Shift)} &\iff \exists c \colon g = r;c \land c \models \mathcal C \land \ell;c \models \mathcal A \\
    &\iff \exists f \colon f \leadsto_{(\ell,r,\mathcal C)} g \land f \models \mathcal A \\
    &\iff g \in \RULE(\sem{\mathcal A}, (\ell,r,\mathcal C))
  \end{align*}
  As a consequence, $\RULE(\sem{\mathcal A}, \mathcal R)$ is
  definable by a condition.
  \qed
\end{proof}

Using the construction given in \Cref{sec:shift-hoare}, we can define transitions through strongest postconditions. However, we have to ensure that equivalent but syntactically distinct conditions collapse to the same state. This gives rise to the following definition:

\begin{definition}[Condition-based transition system]
  \label{def:trans-conditions}
  Given a reactive system $\mathcal S$, a \emph{condition-based transition system} is a tuple $\tup{Q,{\rightarrow},\Init}$ with $\Init\in \Cond_0$, where $Q\subseteq  \Cond_0/{\equiv}$ is the set of states, $[\Init]_\equiv\in Q$ is the initial state, and ${\rightarrow}\subseteq Q\times \mathcal S\times Q$ is the transition relation, defined by $[\mathcal A]_\equiv \xrightarrow{\mathcal R} [\SP(\mathcal A,\mathcal
  R)]_\equiv$ for arbitrary $\mathcal A\in \Cond_0$ and $\mathcal R\in \mathcal S$.
\end{definition}

Transitions are well-defined because
$\mathcal A\equiv \mathcal B$ implies
$\SP(\mathcal A,\mathcal R)\equiv \SP(\mathcal B,\mathcal R)$.
Below we will usually omit the explicit construction of
$\equiv$-equivalence classes and just talk about conditions, tacitly
assuming that they are representatives of the corresponding equivalence
classes. Due to \Cref{lemma-rule-of-a-is-fo-definable},
$\sem\_$ maps any condition-based transition system to an isomorphic
set-based one, with initial state $\sem{\Init}$.

On condition-based transition systems,
the verification problem (is a transition system correct w.r.t.\ $\Bad$) reduces to checking whether all states reachable from $\mathcal A$ entail $\lnot\Bad$.
The condition-based transition system for our running example has the
following initial steps, starting with $\Init_2$ (cf.\ the set-based transition system above):
\[
  \Init_2
  \xrightarrow{\smash{\mathcal R}}
  \SP(\Init_2, \mathcal R)
  \xrightarrow{\smash{\mathcal R}}
  \SP(\SP(\Init_2, \mathcal R), \mathcal R)
  \xrightarrow{\smash{\mathcal R}}
  \cdots
\]
The condition in the initial state expresses that there exists a multiset of empty lists.
The second state (after a single rule application) allows a single list element in any of the lists.
The third state (after two rule applications) allows two list elements in total, either as two one-element lists or as a single two-element list, and otherwise only empty lists. None of these conditions are equivalent.

\subsection{Abstract transition systems}

Condition-based transition systems still do not provide a way to
answer the verification question: compared to the set-based transition
systems, successors are now representable; however, the definition of
transitions relies on entailment, which is undecidable, and the
reachable part of the transition system will typically still be
infinite and therefore not fully explorable (as in the example
above). This is where we introduce predicate abstraction.  Instead of
conditions of arbitrary complexity, states will be subsets (or
conjunctions) of a predetermined set of conditions (the
\emph{predicates}), each of which can be either positive, negative or
absent (unknown) (e.g.\
$(\mathcal P_1 \land \neg \mathcal P_3) \in \Abs(\{\mathcal
P_1,\mathcal P_2,\mathcal P_3\})$).
This guarantees finiteness of the resulting transition system.

\begin{definition}[Predicate abstraction]
  Let $P = \{ \mathcal{P}_1, \dots, \mathcal{P}_n \}$ be a non-empty set of
  conditions in $\Cond_0$ which we will call \emph{predicates}.  We
  define a lattice $\Abs(P)$ as follows:
  \begin{itemize}
  \item The carrier set contains all conjunctions of subsets of
    $P \cup \{\neg \mathcal{P}_1, \tdots,
    \neg \mathcal{P}_n\}$, quotiented by equivalence $\equiv$ (which includes the constants $\condtrue$ and
        $\condfalse$).
  \item The set is ordered by entailment (\kern2pt$\models$).
  \end{itemize}

  \noindent
  For an arbitrary condition $\mathcal{A} \in \Cond_0$,
  $\overline{\mathcal{A}} \defeq \bigland \{ \mathcal{Q}' \in \Abs(P) \mid
  \mathcal{A} \models \mathcal{Q}' \}$ is the strongest element of
  $\Abs(P)$ for which $\mathcal{A} \models \overline{\mathcal{A}}$, i.e.,
  the best possible approximation of $\mathcal{A}$ 
  for the given set of predicates.
\end{definition}

Since $\overline{\mathcal A}$ is in general weaker than $\mathcal A$, reasoning with $\overline{\mathcal A}$ rather than $\mathcal A$ results in over-approximation, meaning that our
abstract transition system suggests that the reachable sets of arrows
are larger than is actually the case. As a result, unsafe states might
seemingly be reachable when in reality they are not. Avoiding this
requires careful selection of a suitable set of predicates. We will
take care of this issue later in \Cref{sec:cegar}.

\begin{definition}[Abstract transition system, $\SP^\#(\HOLE,\mathcal R)$]
  \label{def:ts-for-predicates}
  Given a reactive system $\mathcal S$ and a set of predicates $P$, an \emph{abstract transition system} is a tuple $\tup{Q,{\Rightarrow},\Init}$ with $\Init\in P$, where $Q\subseteq \Abs(P)$ is the set of states, $[\Init]_\equiv \in Q$ is the initial state, and ${\Rightarrow}\subseteq Q\times \mathcal S\times Q$ is the transition relation, defined by $\mathcal Q\abstractTransition{\mathcal R} \SP^\#(\mathcal Q,\mathcal R)$ where $\SP^\#(\mathcal{Q},\mathcal R)\defeq\overline{\SP(\mathcal Q,\mathcal R)}$.
\end{definition}

Hence the abstract transition relation is obtained by
computing the strongest postcondition of a condition and then
weakening it so that it can be expressed in $\Abs(P)$. The latter
requires checking whether $\SP(\mathcal{Q}, \mathcal R)\models \mathcal{P}_i$ or
$\SP(\mathcal{Q}, \mathcal R)\models \lnot \mathcal{P}_i$ for all $i$ and forming a
conjunction of those predicates where the check succeeds.

In fact, this approach precisely follows the paradigm of abstract interpretation, based on Galois connections. Let $\alpha$ and $\gamma$ be mappings from $\Cond_0/{\equiv}$ to $\Abs(P)$ and back, defined by $\alpha(\mathcal A) \defeq \overline{\mathcal A}$ and $\gamma(\mathcal Q)\defeq\mathcal Q$, respectively. We then have the following:

\begin{propositionrep}
  \label{prop:a-g-is-galois}
  Let $P$ be a set of predicates. Then $(\alpha,\gamma)$ as defined
  above is a Galois connection between $\Cond/{\equiv}$ and $\Abs(P)$,
  and $\SP^\#$ is the induced over-approximation of $\SP$ (i.e.,
  $\SP^\#(\HOLE,\mathcal{R})=\alpha\circ \SP(\HOLE,\mathcal{R})\circ
  \gamma$).
\end{propositionrep}

\begin{proof}
  We first check that $(\alpha,\gamma)$ form a Galois
  connection. Given $\mathcal{A}\in\Cond_0/{\equiv}$, we have that
  \[ \gamma(\alpha(\mathcal{A})) = \alpha(\mathcal{A}) =
    \overline{\mathcal{A}} = \bigwedge \{\mathcal{Q}'\in \Abs(P)\mid
    \mathcal{A}\models \mathcal{Q}'\} \sledom \mathcal{A}. \]
  For the other inequality assume that $\mathcal{Q}\in \Abs(P)$ and we
  obtain
  \[ \alpha(\gamma(\mathcal{Q})) = \alpha(\mathcal{Q}) =
    \overline{\mathcal{Q}} = \bigwedge \{\mathcal{Q}'\in \Abs(P)\mid
    \mathcal{Q}\models \mathcal{Q}'\} \equiv \mathcal{Q}. \]
  The equivalence holds since $\mathcal{Q}$ itself is in $\Abs(P)$ and
  is entailed by $\mathcal{Q}$.

  Finally we observe that for $\mathcal{Q}\in\Abs(P)$, we have
  \[ \alpha(\SP(\gamma(\mathcal{Q}),\mathcal{R})) =
    \overline{\SP(\mathcal{Q},\mathcal{R})} =
    \SP^\#(\mathcal{Q},\mathcal{R}). \] Hence
  $\SP^\#(\HOLE,\mathcal{R})=\alpha\circ \SP(\HOLE,\mathcal{R})\circ
  \gamma$. \qed
\end{proof}

Note that
$\alpha$ is (in general) not computable because it
involves the entailment problem of first-order logic. A practical
solution to its non-computability will be discussed later in the paper
in \Cref{sec:undecidability-issues}.

For our running example, we have used this construction in \Cref{sec:motivation}, first for $P = \{\Init_2, \Bad\}$ and next for $P = \{\Init_2, \Bad, \mathcal W_1\}$, to  obtain the abstract transition systems induced by $\Init_2$.
In the second case, as all states entail $\neg \Bad$, we
verified the desired property.

\begin{toappendix}
  \begin{lemma}
    \label{lem:concrete-implies-abs}
    Let $\mathcal R_1, \dots, \mathcal R_n$ be a rule sequence.

    Let $\mathcal{A}_i$ be the conditions of the corresponding run in
    the condition-based transition system, starting from the initial
    condition, i.e., $\mathcal{A}_0 = \Init \in \Abs(P)$ and
    $\mathcal{A}_{i+1} = \SP(\mathcal{A}_i,\mathcal{R}_{i+1})$
    (i.e. $\mathcal{A}_i\xrightarrow{\mathcal{R}_{i+1}} \mathcal{A}_{i+1}$).

    Let $\mathcal Q_i$ be the conditions of the corresponding abstract run, i.e.,
    $\mathcal Q_0 = \mathcal{A}_0 = \Init$ and
    $\mathcal{Q}_{i+1} = \SP^\#(\mathcal{Q}_i,\mathcal{R}_{i+1})$
    (i.e., $\mathcal{Q}_i \abstractTransition{\mathcal{R}_{i+1}} \mathcal{Q}_{i+1}$).

    Then, $\mathcal{A}_i \models \mathcal Q_i$ for all $i$.
  \end{lemma}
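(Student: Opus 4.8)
The plan is to prove the statement by induction on $i$. The base case $i=0$ is immediate: by the hypotheses of the lemma $\mathcal{A}_0 = \Init = \mathcal{Q}_0$, so $\mathcal{A}_0 \models \mathcal{Q}_0$ holds by reflexivity of $\models$.

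For the inductive step I would assume $\mathcal{A}_i \models \mathcal{Q}_i$ and establish $\mathcal{A}_{i+1} \models \mathcal{Q}_{i+1}$, where by definition $\mathcal{A}_{i+1} = \SP(\mathcal{A}_i,\mathcal{R}_{i+1})$ and $\mathcal{Q}_{i+1} = \SP^\#(\mathcal{Q}_i,\mathcal{R}_{i+1}) = \overline{\SP(\mathcal{Q}_i,\mathcal{R}_{i+1})}$. The argument chains together two facts: that $\SP(\HOLE,\mathcal{R})$ is monotone with respect to entailment, and that every condition entails its abstraction, i.e.\ $\mathcal{C} \models \overline{\mathcal{C}}$.

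For the monotonicity I would invoke \Cref{lemma-rule-of-a-is-fo-definable}. Since $\mathcal{A}_i \models \mathcal{Q}_i$ is equivalent to $\sem{\mathcal{A}_i} \subseteq \sem{\mathcal{Q}_i}$, and since $\RULE(\HOLE,\mathcal{R})$ is monotone with respect to set inclusion (directly from its definition $\RULE(X,\mathcal{R}) = \{\,y \mid x\in X,\ x\leadsto_{\mathcal R} y\,\}$, a larger input set yields a larger output set), we obtain $\sem{\SP(\mathcal{A}_i,\mathcal{R}_{i+1})} = \RULE(\sem{\mathcal{A}_i},\mathcal{R}_{i+1}) \subseteq \RULE(\sem{\mathcal{Q}_i},\mathcal{R}_{i+1}) = \sem{\SP(\mathcal{Q}_i,\mathcal{R}_{i+1})}$, that is, $\SP(\mathcal{A}_i,\mathcal{R}_{i+1}) \models \SP(\mathcal{Q}_i,\mathcal{R}_{i+1})$. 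The extensiveness $\mathcal{C} \models \overline{\mathcal{C}}$ is exactly the defining property of $\overline{\mathcal{C}}$ and coincides with the first inequality of the Galois connection in \Cref{prop:a-g-is-galois}. Combining the two by transitivity of $\models$ gives
\[
  \mathcal{A}_{i+1} = \SP(\mathcal{A}_i,\mathcal{R}_{i+1})
  \models \SP(\mathcal{Q}_i,\mathcal{R}_{i+1})
  \models \overline{\SP(\mathcal{Q}_i,\mathcal{R}_{i+1})}
  = \mathcal{Q}_{i+1},
\]
which closes the induction.

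I do not expect a serious obstacle: the result is the standard soundness of abstract interpretation (the concrete iterates are soundly over-approximated by the abstract ones, i.e.\ each concrete condition entails its abstract counterpart) specialized to this setting. The only step requiring an explicit word of justification, since it is not recorded as a separate statement in the text, is the monotonicity of $\SP$; I would derive it from \Cref{lemma-rule-of-a-is-fo-definable} as above, though it could equally be read off the formula $\SP(\mathcal{A},\mathcal{R}) \equiv \exists r.(\mathcal{C} \land \mathcal{A}_{\downarrow r})$ of \Cref{def:compute-sp-wp}, using that the shift and the connectives $\exists$ and $\land$ all preserve entailment.
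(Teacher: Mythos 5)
Your proof is correct and follows essentially the same route as the paper's: induction on $i$, monotonicity of $\SP$, and the extensiveness property $\mathcal C \models \overline{\mathcal C}$ from the Galois connection; you even supply the justification for monotonicity (via \Cref{lemma-rule-of-a-is-fo-definable}) that the paper leaves implicit. One trivial slip in your closing aside: the formula from \Cref{def:compute-sp-wp} is $\SP(\mathcal A,\mathcal R) \equiv \exists r.(\mathcal C \land \mathcal A_{\downarrow \ell})$ (shift along $\ell$, not $r$), but this does not affect your argument.
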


  \begin{proof}
    Using \Cref{prop:a-g-is-galois}, we first observe that for any
    $\mathcal{A} \in \Abs(P)$ we have
    $\SP(\mathcal{A},\mathcal R) \models \SP^\#(\mathcal{A}, \mathcal
    R)$:
      \[ \SP(\mathcal{A},\mathcal{R}) \models
      \overline{\SP(\mathcal{A},\mathcal{R})} =
      \alpha(\SP(\gamma(\mathcal{A}),\mathcal{R})) =
      \SP^\#(\mathcal{A},\mathcal{R}) \]
    Now we show $\mathcal{A}_i \models \mathcal Q_i$ by induction.
    \begin{itemize}
      \item $i=0$: trivial
      \item $i \to i+1$:
        Given $\mathcal{A}_i \models \mathcal Q_i$,
        we have $\SP(\mathcal{A}_i, \mathcal R_{i+1}) \models \SP(\mathcal Q_i, \mathcal R_{i+1})$ since $\SP$ is monotone.

        As $\SP(\mathcal{A},\mathcal R) \models \SP^\#(\mathcal{A}, \mathcal R)$ (shown above),
        also $\SP(\mathcal Q_i, \mathcal R_{i+1}) \models \SP^\#(\mathcal Q_i, \mathcal R_{i+1})$.

        In total: $\mathcal{A}_{i+1} = \SP(\mathcal{A}_i, \mathcal R_{i+1}) \models \SP^\#(\mathcal Q_i, \mathcal R_{i+1}) = \mathcal Q_{i+1}$.
        \qed
    \end{itemize}
  \end{proof}
\end{toappendix}

\begin{theoremrep}
  \label{thm:abstract-ts-correct}
  Let $P$ be a set of predicates with $\Init,\Bad\in P$. If all reachable states of the abstract transition system with initial state $\Init$ entail $\lnot \Bad$, the set-based transition system induced by $\Init$ is correct w.r.t.\ $\Bad$.
\end{theoremrep}

\begin{proof}
  Assume by contradiction
  that the system is not correct, that is, there exists a rule
  sequence $\mathcal{R}_1,\dots,\mathcal{R}_n$ such that we have the
  following transitions in the (concrete) set-based transition
  system
  \[ \llbracket\Init\rrbracket \xrightarrow{\mathcal{R}_1} X_1
    \xrightarrow{\mathcal{R}_2} \dots \xrightarrow{\mathcal{R}_{n-1}}
    X_{n-1} \xrightarrow{\mathcal{R}_n} X_n \] where
  $X_n\cap\llbracket \Bad\rrbracket \neq \emptyset$. Note that here
  $X_{i+1} = \RULE(X_i,\mathcal{R}_{i+1})$.

  Let $\mathcal A_i$ be the conditions of the corresponding run in the
  (concrete) condition-based transition system: we define
  $\mathcal{A}_0 = \Init$,
  $\mathcal{A}_{i+1} = \SP(\mathcal{A}_i,\mathcal{R}_{i+1})$. By
  induction, using \Cref{lemma-rule-of-a-is-fo-definable}, we obtain
  $\llbracket \mathcal{A}_i\rrbracket = X_i$.

  Now let $\mathcal Q_i$ be the conditions of the corresponding
  abstract run: define $\mathcal{Q}_0 = \Init$,
  $\mathcal{Q}_{i+1} = \SP^\#(\mathcal{Q}_i,\mathcal{R}_{i+1})$. In
  particular,
  $\mathcal{Q}_i \abstractTransition{\mathcal{R}_{i+1}}
  \mathcal{Q}_{i+1}$.

  By \Cref{lem:concrete-implies-abs} we have
  $\mathcal{A}_i\models \mathcal{Q}_i$ for all $i$. This implies
  $\llbracket \mathcal{A}_i\rrbracket \subseteq \llbracket
  \mathcal{Q}_i\rrbracket$.

  Since
  $\llbracket \mathcal{A}_n\rrbracket \cap\llbracket \Bad\rrbracket =
  X_n\cap \llbracket \Bad\rrbracket \neq \emptyset$, we obtain that
  $\llbracket \mathcal{Q}_n\rrbracket \cap\llbracket \Bad\rrbracket
  \neq \emptyset$. This implies that
  $\llbracket \mathcal{Q}_n\rrbracket \not\subseteq
  \Cond_0\backslash\llbracket \Bad\rrbracket = \llbracket \lnot\Bad
  \rrbracket$, which implies $\mathcal{Q}_n\notmodels \lnot
  \Bad$. But this is a contradiction since $\mathcal{Q}_n$ is a state
  reachable in the abstract transition system that implies
  $\lnot \Bad$ by assumption.
  %
  %
  %
  %
  \qed
\end{proof}

\begin{toappendix}
\subsection{Abstraction and concretization via Galois
  connections}

In the theory of abstract interpretation
\cite{c:abstract-interpretation,cc:ai-unified-lattice-model} one
usually employs a Galois connection to connect the concrete and the
abstract domain (cf.\ \Cref{sec:abstract-interpretation}),
allowing to give a uniform treatment.

We have already seen one such Galois connection in this paper:
$(\alpha, \gamma)$, furthermore the concretization map $\llbracket\_\rrbracket$.
In the diagram below we give a more systematic overview over the
various abstraction and concretization maps used in the paper and
their properties.

\noindent\mbox{}\hfill\begin{tikzpicture}[x=3cm,y=-1cm]
  \node (pgraph) at (0,0) {$\PArr{0}$};
  \node (cond) at (1,1) {$\Cond_0/\!\equiv$};
  \node (abspi) at (2,0) {$\Abs(P)$};
  \draw[->] (cond) to[bend left=7] node[below]{$\llbracket\_\rrbracket$} (pgraph);
  \draw[->] (cond) to[bend left=7] node[above,pos=0.25]{$\alpha\ $} (abspi);
  \draw[->] (abspi) to[bend left=7] node[below]{$\gamma$} (cond);
  \draw[->] (pgraph) to[bend left=15] node[above]{$\alpha_A$} (abspi);
  \draw[->] (abspi.177) to[bend right=8] node[below]{$\gamma_A$} (pgraph.3);
  \draw[->] (pgraph) to[out=170,in=-170,looseness=4] node[left]{$\RULE(\HOLE, \mathcal R)$} (pgraph);
  \draw[->] (abspi) to[out=12,in=-12,looseness=5] node[right]{$\SP^\#(\HOLE, \mathcal R)$} (abspi);
  \draw[->] (cond) to[out=-60,in=-120,looseness=5] node[below]{$\SP(\HOLE, \mathcal R)$} (cond);
\end{tikzpicture}\hfill\mbox{}

First note that we cannot define a Galois connection between
$\PArr{0}$ and $\Cond_0$ because the corresponding abstraction (left
adjoint to $\llbracket\_\rrbracket$) cannot be defined: For any
non-first-order-definable set of graphs, there is a series of graph
conditions, providing successively better over-approximations of the
set as the conditions increase in size, but there is only a unique
best over-approximation if we restrict to first-order-definable sets
of graphs.  



However, somewhat surprisingly, there exists a Galois connection
between $\PArr{0}$ and $\Abs(P)$ (with $P=\{\mathcal{P}_1,\dots,\mathcal{P}_n\}$) that can
be defined as follows:
\begin{eqnarray*}
  \alpha_A(X) & \defeq & \bigwedge \{ \mathcal{Q} \mid
  \mathcal{Q}\in \{ \mathcal{P}_1,\neg
  \mathcal{P}_1,\dots,\mathcal{P}_n,\neg \mathcal{P}_n
  \}, \forall x \in X \colon x \models \mathcal{Q} \} \\
  \gamma_A(\mathcal{Q}) & \defeq & \{ x\in \Arr_0 \mid x\models \mathcal{Q} \}
\end{eqnarray*}

It is easy to see that it is a Galois connection.

\begin{lemmarep}
  \label{lem:comm-galois}
  It holds that $\gamma_A = \llbracket\_\rrbracket \circ \gamma$ and
  $\alpha = \alpha_A\circ\llbracket\_\rrbracket$.
\end{lemmarep}

\begin{proof}
  Given $\mathcal{Q}\in\Abs(P)$, we have:
  \[
    \llbracket\gamma(\mathcal{Q})\rrbracket = \llbracket
    \mathcal{Q}\rrbracket = \{x\in \Arr_0\mid x\models \mathcal{Q}\} =
    \gamma_A(\mathcal{Q}) \]
  Furthermore, given $\mathcal{Q}\in \Cond_0/\!\equiv$, we obtain:
  \begin{eqnarray*}
    \alpha_A(\llbracket \mathcal{Q}\rrbracket) & = & \bigwedge \{ \mathcal{Q} \mid
    \mathcal{Q}'\in \{ \mathcal{P}_1,\neg
    \mathcal{P}_1,\dots,\mathcal{P}_n,\neg \mathcal{P}_n
    \}, \forall x \in \llbracket Q\rrbracket \colon x \models
    \mathcal{Q}' \} \\
    & = &
    \bigland \{ \mathcal{Q}' \in \Abs(P) \mid
    \mathcal{Q} \models \mathcal{Q}' \} \\
    & = & \overline{\mathcal A} = \alpha(\mathcal A)
  \end{eqnarray*}
  \vskip-\baselineskip\qed
\end{proof}

The following lemma shows that the $\SP^\#$ is also the
over-approximation that is induced by $\HOLE$ and the Galois
connection $(\alpha_A,\gamma_A)$, meaning that we could have based our
developments on it instead of $(\alpha,\gamma)$.

\begin{lemmarep}
  \label{lem:sphash-induced-approx}
  $\SP^\#$ is the the induced over-approximation of
  $\RULE(\HOLE, \mathcal R)$ via the Galois connection
  $(\alpha_A,\gamma_A)$.
\end{lemmarep}

\begin{proof}
  We have to show that
  $\alpha_A(\RULE(\gamma_A(\mathcal{P}), \mathcal R)) =
  \SP^\#(\mathcal{P}, \mathcal R)$. And indeed we have:
  \begin{align*}
    & \alpha_A(\RULE(\gamma_A(\mathcal{P}), \mathcal R)) \\
    \text{(\Cref{lem:comm-galois})} &= \alpha_A(\RULE(\llbracket \gamma(\mathcal{P})\rrbracket, \mathcal R)) \\
    \text{(\Cref{lemma-rule-of-a-is-fo-definable})}
    &= \alpha_A(\llbracket\SP(\gamma(\mathcal{P}), \mathcal R)\rrbracket) \\
    \text{(\Cref{lem:comm-galois})} &= \alpha(\SP(\gamma(\mathcal{P}), \mathcal R)) =
    \SP^\#(\mathcal{P}, \mathcal R)
  \end{align*}
  \vskip-\baselineskip\qed
\end{proof}
\end{toappendix}

\section{Counterexample-guided abstraction refinement (CEGAR)}
\label{sec:cegar}

We are now ready to define the full CEGAR loop. In particular, we will
explain how to obtain suitable predicates for refinement.

\subsection{Obtaining predicates}
\label{sec:obtaining-predicates}

In the example from \Cref{sec:motivation}, using only predicates $P=\{\Init_2,\Bad\}$ and initial state $\Init_2$ we found an apparently unsafe abstract state, i.e., one which did not entail $\neg\Bad$, through the trace $\append,\append$. However, the condition-based state reached via the same trace was actually safe, and augmenting $P$ with $\mathcal W_1$ resulted in a successful proof that, indeed, all reachable states are safe. Hence the general question arises how to refine a set of predicates, given an abstract trace to an unsafe state (i.e., a \emph{counterexample} to correctness) that does not exist on the concrete level (i.e., is \emph{spurious}).

\begin{definition}[Spurious counterexample]
  \label{def:spurious-cex}
  Let $\mathcal S$ be a reactive system and $P$ be a set of predicates with $\Init,\Bad \in P$, and consider the abstract transition system with initial state $\Init$. A \emph{counterexample} to correctness w.r.t.\ $\Bad$ is a trace $\mathcal R_1\cdots \mathcal R_n\in \mathcal S^*$  such that $\Init \abstractTransition{\mathcal R_1} \mathcal{Q}_1 \abstractTransition{\mathcal R_2} \mathcal{Q}_2
  \dots \abstractTransition{\mathcal R_n} \mathcal{Q}_n$ where $\mathcal{Q}_n \notmodels \neg \Bad$. The counterexample is \emph{spurious}
  if $\{\Init\} \mathcal{R}_1;\dots;\mathcal{R}_n \{\lnot \Bad\}$.
\end{definition}

Note that checking whether
$\{\Init\} \mathcal{R}_1;\dots;\mathcal{R}_n \{\lnot \Bad\}$ is
equivalent to each of the following two entailments:
\begin{eqnarray*}
  && \SP(\Init,\mathcal{R}_1;\dots;\mathcal{R}_n) = \SP(\dots \SP(\SP(\Init,
  \mathcal R_1), \mathcal R_2) \dots, \mathcal R_n) \models \neg
  \Bad \\
  && \Init\models \WP(\mathcal{R}_1;\dots;\mathcal{R}_n,\lnot\Bad) =
  \WP(\mathcal R_1, \dots
  \WP(\mathcal R_{n-1},\WP(\mathcal R_n,\lnot\Bad)) \dots )
\end{eqnarray*}

Hence we have at least two options for checking spuriousness. In both cases,
this involves intermediate predicates
$\mathcal{Q}'_1,\dots,\mathcal{Q}'_{n-1}$
(for strongest postconditions: $\mathcal Q'_1=\SP(\Init,\mathcal R_1)$ and $\mathcal{Q}'_i = \SP(\mathcal{Q}'_{i-1}, \mathcal{R}_i)$ for $1<i\leq n$;
for weakest preconditions: $\mathcal Q'_{n-1}=\WP(\mathcal R_n,\neg\Bad)$ and $\mathcal{Q}'_{i-1} = \WP(\mathcal R_i, \mathcal Q'_i)$ for $1\leq i<n$)
such that
\[ \{\Init\} \mathcal{R}_1 \{\mathcal{Q}'_1\} \mathcal{R}_2 \dots
  \mathcal{R}_{n-1} \{\mathcal{Q}'_{n-1}\} \mathcal{R}_n \{\lnot \Bad\} \]

We then augment $P$ by adding all the $\mathcal Q'_i$. In the running example $\mathcal W_1$ equals the $\WP$-based $\mathcal Q'_1$; adding it to $P$ eliminated the counterexample.
This elimination is in fact guaranteed (in a rather obious way) by the underlying theory, as formally stated by the following proposition:

\begin{propositionrep}
  \label{thm:refinement-eliminates-ce}
  Let $P$ (with $\Init,\Bad\in P$) be a set of predicates, and (considering $\Init$ as initial state) let $\mathcal{R}_1,\cdots, \mathcal{R}_n$ be a spurious counterexample to correctness w.r.t. $\Bad$. Let
  $\mathcal{Q}'_1,\dots,\mathcal{Q}'_{n-1}$ be predicates such that
  \[ \{\Init\} \mathcal{R}_1 \{\mathcal{Q}'_1\} \mathcal{R}_2 \dots
    \mathcal{R}_{n-1} \{\mathcal{Q}'_{n-1}\} \mathcal{R}_n \{\lnot \Bad\}. \]
  
  Then, in the abstract transition system based on
  $\Abs(P\cup\{\mathcal{Q}'_1,\dots,\mathcal{Q}'_{n-1}\})$ with initial state $\Init$,  the trace
    $\mathcal{R}_1,\dots,\mathcal{R}_n$ leads to a condition entailing
    $\lnot \Bad$; in other words, it is not a counterexample any more.
\end{propositionrep}


\begin{proof}
  \mbox{}
  
  \noindent\emph{Sketch:} With the new predicates, after each abstract
  step the strongest postcondition entails the intermediate predicate
  $\mathcal{Q}'_i$. Hence $\mathcal{Q}'_i$ is entailed by the
  predicate describing the current abstract state.  Hence the last
  element of the sequence will also entail $\neg \Bad$ and therefore
  it is no longer a counterexample.

  More formally: In the refined abstract transition system let
  $\mathcal{Q}_0\equiv \Init$ and assume -- by contradiction -- that,
  in the abstract transition system based on
  $\Abs(P\cup\{\mathcal{Q}'_1,\dots,\mathcal{Q}'_{n-1}\})$, there
  exists a path
  $\mathcal{Q}_0 \abstractTransition{\mathcal R_1} \mathcal{Q}_1
  \abstractTransition{\mathcal R_2} \mathcal{Q}_2 \dots
  \abstractTransition{\mathcal R_n} \mathcal{Q}_n$ such that
  $\mathcal{Q}_n$ does not entail $\lnot\Bad$.

  We define $\mathcal{Q}'_0 = \Init$, $\mathcal{Q}'_n = \lnot \Bad$
  and show that (the conjunction representing) $\mathcal{Q}_i$ entails
  $\mathcal{Q}'_i$, leading to a contradiction. Clearly
  $\mathcal{Q}_0 = \Init$ entails $\mathcal{Q}'_0 = \Init$. Now assume
  that $\mathcal{Q}_i\models \mathcal{Q}'_i$. Then
  $\SP(\mathcal{Q}_i,\mathcal{R}_{i+1}) \models
  \SP(\mathcal{Q}'_i,\mathcal{R}_{i+1}) \models \mathcal{Q}'_{i+1}$ by
  monotonicity of $\SP$ and the fact that
  $\{\mathcal{Q}'_i\} \mathcal{R}_{i+1} \{\mathcal{Q}'_{i+1}\}$. Hence
  $\mathcal{Q}_{i+1} = \alpha(\SP(\mathcal{Q}_i,\mathcal{R}_{i+1}))$
  entails $\mathcal{Q}'_{i+1}$ due to the definition of $\alpha$
  (with $\alpha(\mathcal{Q})=\overline{\mathcal{Q}}$).  \qed
\end{proof}

\subsection{Idealized algorithm}
\label{sec:idealized-algorithm}

The above brings us to the (idealized) CEGAR algorithm already discussed in \Cref{sec:motivation} and illustrated in \Cref{fig:cegar-method}. Starting
with an initial set of predicates $P = \{\Init, \Bad\}$, construct
the abstract transition system with initial state $\Init$, adding successor
states until either no new states are found or we reach a state $\mathcal{Q}_n$ that does not entail $\lnot \Bad$. In the former case, the algorithm terminates: verification succeeded, the system is correct w.r.t.\ $\Bad$. In the latter case, however, the 
sequence of rules $\mathcal{R}_1,\dots,\mathcal{R}_n$ from $\Init$ to $\mathcal{Q}_n$ is a counterexample; check whether it is spurious by computing either strongest postconditions or weakest preconditions, obtaining additional predicates $\mathcal{Q}'_1,\dots,\mathcal{Q}'_{n-1}$ as described above.

\begin{itemize}
\item If it is
  spurious, add $\mathcal{Q}_1', \dots \mathcal{Q}_n'$ to the current predicate
  set $P$ to eliminate the spurious counterexample, and restart the
  analysis.
\item If it is not spurious, the algorithm terminates:
  verification failed, the system is not correct w.r.t.\ $\Bad$.
\end{itemize}

\begin{propositionrep}
  \label{prop:correctness-idealized-algo}
  The idealized algorithm is correct in the sense that
  the system is correct if it is successful and incorrect if it
  failed. Moreover, if counterexamples are processed in ascending
  length (i.e., we always process the \emph{shortest} counterexample), it
  constitutes a semi-decision procedure.
\end{propositionrep}

\begin{proof}
  The successful output occurs when, after some number of steps, the
  algorithm generated a set of predicates such that the corresponding
  abstract transition system has only reachable states entailing
  $\neg\Bad$. By \Cref{thm:abstract-ts-correct}, this means the system
  is correct.

  A failed output results from having found a counterexample that is
  not spurious. By \Cref{def:spurious-cex}, this means that the
  following is not a valid Hoare triple:
  \[ \{\Init\} \mathcal{R}_1 \{\mathcal{Q}'_1\} \mathcal{R}_2 \dots
    \mathcal{R}_{n-1} \{\mathcal{Q}'_{n-1}\} \mathcal{R}_n \{\lnot
    \Bad\}. \] Hence
  $\SP(\Init,\mathcal{R}_1;\dots;\mathcal{R}_n)\notmodels \neg\Bad$,
  hence there exists at least one arrow satisfying $\Init$ that can be
  transformed by the rule sequence $\mathcal{R}_1;\dots;\mathcal{R}_n$
  to an arrow that does not satisfy $\neg\Bad$, hence it satisfies
  $\Bad$. This implies that the system is incorrect.

  If the system is incorrect, there exists a counterexample of length
  $m$ witnessing this. As counterexamples are processed in ascending
  length, this counterexample is eventually found and the algorithm
  terminates.
  \qed
\end{proof}

\subsection{Practical algorithm}
\label{sec:undecidability-issues}

So far we have based our definitions on entailment ($\models$),
assuming that it is somehow computable, when in fact the entailment
problem is in general undecidable. This affects both the abstraction
$\alpha \colon \Cond_0/{\equiv} \to \Abs(P)$ and the check for
spuriousness.

A practical implementation of the entailment check can only give
approximate answers to this problem and may be unable to prove or
disprove some entailments. Hence we can only rely on ``provable
entailment'' $\hatmodels$, which is a subrelation of $\models$, not formally characterized but determined by the strength of our proof tools and the available time. Based
on this and given $\mathcal{A}\in\Cond_0$, we can define
$\widehat{\mathcal{A}} \defeq \bigland \{ \mathcal{Q}' \in \Abs(P)
\mid \mathcal{A} \hatmodels \mathcal{Q}' \}$ (the strongest condition
in $\Abs(P)$ for which we can prove that it is implied by
$\mathcal A$). Since $\hatmodels$ is a subrelation of $\models$, we
obtain $\overline{\mathcal{A}} \models \widehat{\mathcal{A}}$.

In practice, this is computed by iterating over all predicates
$\mathcal{P}\in P$ and checking whether
$\mathcal{A}\hatmodels \mathcal{P}$ or
$\mathcal{A}\hatmodels \lnot \mathcal{P}$. Taking the conjunction of
all such predicates yields $\widehat{\mathcal{A}}$. Predicates for
which we obtain no result are not included in the conjunction.

Based on that we can define
$\widehat{\alpha} \colon \Cond_0/{\equiv} \to \Abs(P)$ via
$\widehat{\alpha}(\mathcal{A}) = \widehat{\mathcal{A}}$ as a
computable function. Note that $\widehat{\alpha}$ is an
over-approximation of $\alpha$ ($\alpha\models\widehat{\alpha}$).

Using such an over-approximation $\widehat{\alpha}$ also affects the
abstract transitions.  Compared to \Cref{def:ts-for-predicates},
which used the induced over-approximation
$\SP^\#(\mathcal{Q},\mathcal R) = \alpha(\SP(\gamma(\mathcal{Q}),
\mathcal R))$, we now obtain a function
$\widehat{\SP}(\mathcal{Q},\mathcal R) \defeq
\widehat{\alpha}(\SP(\gamma(\mathcal{Q}), \mathcal R))$.  This is no
longer the induced over-approximation, however, since
$\alpha \models \widehat{\alpha}$, we have
$\SP^\#(\mathcal{Q},\mathcal R) \models
\widehat{\SP}(\mathcal{Q},\mathcal R)$.  Therefore $\widehat{\SP}$ is
a safe approximation of $\SP$ and a corresponding adaptation of
\Cref{thm:abstract-ts-correct} to such abstract transition systems still holds.

Undecidability of $\models$ also affects detection of spurious
counterexamples which involves an entailment check (see
\Cref{sec:obtaining-predicates}).  A solver might be unable to
produce a proof in reasonable time, which means that the algorithm
cannot check and eliminate this specific counterexample. Either the
method proceeds with another counterexample or stops and reports the
found counterexample to the user as a (potential) error.

Hence, while the previous procedure is a semi-decision method
(cf.\ \Cref{prop:correctness-idealized-algo}), this is now lost by the
additional level of abstraction.

\section{Implementation}

The CEGAR algorithm described in this paper has been implemented in a prototypical tool\footnote{\url{https://git.uni-due.de/sflastol/cegar-prototype}}~\cite{u:abstraktionsverfeinerung}, instantiated to $\ILC(\graphf)$, i.e., graph transformation systems.
It is a command-line based tool that verifies a given graph transformation system against given predicates $\Init, \Bad$, and automatically derives new predicates from strongest postconditions.
The tool is written in Java and makes heavy use of the graph library presented in \cite{bkmns:graph-tool-library}.
For checking satisfiability of conditions, the algorithm from \cite{sksclo:coinductive-satisfiability-nested-conditions} is used and has been adapted to cospan conditions.

Previous tests of the satisfiability checker alone have indicated that for many practical examples, the intermediate conditions quickly reach impractical sizes, leading to long runtimes, which might limit the usefulness of the tool.
There are several possible optimizations to drastically cut down the
size of intermediate results, only some of which have been implemented
in the tool so far. This is an avenue for future work.

Nevertheless, there are several examples that can be successfully
verified using the CEGAR tool in its current state.  One example (\texttt{talk\_delete2.sgf}) is a
system where a rule can delete two nodes at once, $\Init$ states that
exactly three nodes exist, and $\Bad$ states that the graph is empty
(for simplicity, edges are assumed to be absent). Another example (\texttt{talk\_outedge.sgf})
starts from a non-empty graph, adds and deletes edges and their target
nodes and asks again whether the empty graph is reachable. The tool can verify these systems to be safe
after a single refinement step using strongest postconditions, almost
instantaneously.  Further successful examples of similar complexity
are discussed in \cite{u:abstraktionsverfeinerung}.

As of now, the tool can immediately spot the error in the unrefined
version of our running example (\texttt{paper\_examples.sgf}), but is unable to verify the refined
variant in reasonable time (out of memory after 130 seconds). We hypothesize that strongest
postconditions --- as currently implemented --- do not always yield the most useful predicates for
refinement and using weakest preconditions (as we manually did in the
running example) might yield better results.
%
Furthermore, optimizations can be implemented to reduce the size of the conditions both in the CEGAR loop itself and the satisfiability checker.
In manual tests, elimination of redundant subconditions has typically resulted in a reduction of more than half of all subconditions at any given step. (A similar question is studied in \cite{nkat:conditions-model-transformation}, which reports on the simplification of constraint-guaranteeing conditions.)
An additional optimization could be to pair the existing solver with
translation to first-order formula and using off-the-shelf first-order
logic or SMT solvers.




\section{Conclusion and Future Work}
\label{sec:conclusion}

Static analysis and verification techniques for graph transformation
systems have by now been studied for at least two decades. Graph
transformation provides a flexible and powerful modelling technique,
but this comes with a trade-off with respect to verification. Due to
its inherent complexity, with features such as infinite state spaces
and the ability to model dynamic topologies, graph transformation
system pose several challenges for verification. We can not give a
complete overview over the literature, but we would like to mention
that there are contributions based on efficient state space
enumeration \cite{r:groove}, invariant checking
\cite{hw:consistency-conditional-gragra,dg:k-inductive-invariants-gts},
over-approximation \cite{r:canonical-graph-shapes,kk:augur2},
well-structured transition systems
\cite{o:resilience-wst-gts,ks:framework-wsgts-journal}, logic-based
approaches
\cite{pp:hoare-calculus-graph-programs,pp:mso-incorrectness-gp2},
techniques for termination analysis
\cite{bkz:termination-gts,oe:termination-gts-weighted}, and so on. In
many cases, these techniques were adapted from other areas such as
string and term rewriting, infinite state verification or program
analysis.

A technique that has not yet received much attention in the area of
graph transformation is the CEGAR technique adapted in this paper.
CEGAR is a well-known and widely studied method for program analysis
\cite{cgjlv:abstraction-refinement,cgjlv:abstraction-refinement-journal,hjmm:abstractions-proofs,hjms:lazy-abstraction,jm:practical-predicate-refinement,m:interpolation-smc,eks:abstr-refine-craig}. The
typical setting is transition systems (Kripke structures) respectively
program verification.

To the best of our knowledge, this is the first CEGAR approach to the verification
of graph transformation systems that is based on predicate
abstraction. In \cite{kk:cegar-gts} a CEGAR framework based on
approximated unfoldings was introduced for graph transformation
systems. This however follows a completely different approach -- not
connected to predicate abstraction -- and works only for a restricted
class of systems: in particular rules are not allowed to delete nodes
and there is no integration of application conditions.

As mentioned above, the prototypical implementation still has
scalability issues and a next step would be to introduce several
optimization techniques, in particular, for improving efficiency for
the entailment problem and covering more entailments. One possibility
could be to consider sufficient conditions for entailment that are
easier to check, similar to~\cite{rc:categories-nested-conditions}.

Another line of research is to improve the automatically generated
predicates, in terms of their size and in terms of their usefulness
(i.e., whether they help to prove the correctness of the system).
This is also related to the scalability question. In CEGAR the typical
idea is to use so-called Craig interpolants
\cite{hjmm:abstractions-proofs}. Given two formulas $\phi_1,\phi_2$ in
a program logic (referring to program variables) with
$\phi_1\models\phi_2$, the Craig interpolant (of $\phi_1,\phi_2$) is a
formula $\psi$ that satisfies $\phi_1\models\psi\models\phi_2$ and
contains only the variables present in both $\phi_1,\phi_2$. The idea
is to eliminate spurious counterexamples by computing \emph{both}
weakest preconditions and strongest postconditions (where the former
entails the latter), but adding the
Craig interpolants between both. This typically leads to more compact
and better suited predicates. In the setting of conditions it
is unclear what the analogue to Craig interpolants actually is, how
they can be computed and used. We believe that this is a promising and
potentially fruitful line of research.

In this paper we concentrated mainly on applications in the area of
graph transformation systems. However, the framework of reactive systems is
more general and encompasses also other rewriting systems, such as
ground term rewriting based on Lawvere theories. In
\cite{sksclo:coinductive-satisfiability-nested-conditions} we showed how
to compute shifts in this setting, which enable us to compute
pre- and postconditions and instantiate the entire CEGAR framework. It
would be worthwhile to further investigate the applicability and scalability
of this approach.

\medskip

\noindent\emph{Disclosure of interests:} The
authors have no competing interests to declare that are relevant to
the content of this article.

\bibliographystyle{splncs04}
\bibliography{references}
\renewcommand{\appendixbibliographystyle}{plain}

\end{document}